\newtheorem{theorem}{Theorem}
\newtheorem{corollary}{Corollary}
\newtheorem{lemma}{Lemma}
\newcommand{\decprob}[3]{%
    \begin{center}%
        \begin{minipage}{0.9\linewidth}%
            \defDecprob{#1}{#2}{#3}
        \end{minipage}%
    \end{center}%
}
\newcommand{\CI}{\textsc{Closeness Improvement}}
\newcommand{\BI}{\textsc{Betweenness Improvement}}
\newcommand{\DS}{\textsc{Dominating Set}}
\newcommand{\YI}{\textsc{Yes}-instance}
\newcommand{\NI}{\textsc{No}-instance}
\newcommand{\DBI}{\textsc{Directed Betweenness Improvement}}
\newcommand{\SC}{\textsc{Set Cover}}
\newcommand{\C}{\ensuremath{\mathcal{C}}}
\renewcommand{\S}{\ensuremath{\mathcal{S}}}
\newcommand{\ETH}{Exponential Time Hypothesis}
\begin{document}
\title{The Parameterized Complexity of Centrality Improvement in Networks
  \thanks{MS supported by the People Programme (Marie Curie Actions) of the European Union's Seventh Framework Programme (FP7/2007-2013) under REA grant agreement number {631163.11} and the Israel Science Foundation (grant no. 551145/14).}}

\author[1]{Clemens~Hoffmann}

\affil[1]{Institut f\"{u}r Softwaretechnik und Theoretische Informatik,
 TU Berlin, Germany,
 \texttt{h.molter@tu-berlin.de}}

\author[1]{Hendrik~Molter}

\author[2]{Manuel~Sorge}

\affil[2]{Dept.\ Industrial Engineering and Management, Ben-Gurion University of the Negev, Beer Sheva, Israel,
\texttt{sorge@post.bgu.ac.il}}
\maketitle

\begin{abstract}
  \looseness=-1 The centrality of a vertex~$v$ in a network intuitively captures how
  important $v$ is for communication in the network.  The task of
  improving the centrality of a vertex has many applications, as a
  higher centrality often implies a larger impact on the network or
  less transportation or administration cost. In this work we study
  the parameterized complexity of the NP-complete problems \CI\ and
  \BI\ in which we ask to improve a given vertex' closeness or
  betweenness centrality by a given amount through adding a given
  number of edges to the network. Herein, the closeness of a vertex~$v$
  sums the multiplicative inverses of distances of other vertices to $v$
  and the betweenness sums for each pair of vertices the fraction of
  shortest paths going through~$v$. 
  Unfortunately, for the natural parameter ``number of edges to add''
  we obtain hardness results, even in rather restricted cases. On the
  positive side, we also give an island of tractability for the
  parameter measuring the vertex deletion distance to cluster
  graphs.
\end{abstract}

\section{Introduction}

Measuring the centrality of a given vertex in a network has attracted the interest of researchers since the second half of the 20th century~\cite{FREEMAN1978215}, see Newman's book~\cite{newman_networks:_2010} for an overview. There are various interpretations of what makes a vertex more central than another vertex in a network. Two popular measures for the centrality of a vertex~$z$ 
are \emph{closeness centrality}~$c_z$ and \emph{betweenness
centrality}~$b_z$~\cite{FREEMAN1978215}. They are based on the distances of the
given vertex~$z$ to the remaining vertices and on the number of shortest paths going
through~$z$, respectively. For this work, we use the following definitions.
\begin{align*}
c_z = \sum_{\substack{u \in V \\
        d(u,z) < \infty \\ u \neq z}} \frac{1}{d(z,u)}   
  &&
b_z = \sum_{\substack{s,t \in V \\
        s \neq t;s,t \neq z \\
        \sigma_{st} \neq 0}} \frac{\sigma_{stz}}{\sigma_{st}}   
\end{align*}
Herein, $d(s, t)$ is the distance between two vertices~$s$ and $t$, that
is, the number of edges on a shortest $s$-$t$ path, $\sigma_{st}$ is
the number of shortest $s$-$t$ paths, and $\sigma_{stz}$ is the number
of shortest $s$-$t$ paths that contain~$z$. Intuitively, if $z$ has
many close-by vertices, then its closeness centrality is large, and
if $z$ is on shortest paths between many vertices, then
its betweenness centrality is large. The closeness centrality as defined above is also known as the \emph{harmonic centrality}.\footnote{There are several definitions for closeness centrality in the literature. We use the present one because it is natural~\cite{newman_networks:_2010} and it was used in closely related work~\cite{crescenzi2016greedily}.}

\looseness=-1 Analyzing vertex centrality in networks has been studied intensively (e.g. \cite{FREEMAN1978215,okamoto2008ranking,newman2005measure,brandes2008variants,newman_networks:_2010}) and comprises a diverse set of applications in, e.g., biological~\cite{rubinov2010complex}, economic~\cite{Opsahl2010245}, and social networks~\cite{FREEMAN1978215}. 
Some examples: A transport company might be interested in placing its depots centrally such that the transportation costs are rather low. 
The value of an airport might be influenced by its centrality in the flight-connection network between airports. 
The most central nodes in a computer network may be useful for determining the locations of data centers where the routes are short and peering costs are low. In social networks, economically important influencers are presumably more central than other users. 

\looseness=-1 Since it is so desirable to find vertices with large centrality in a graph,
vertices have incentive to improve their own centrality. E.g., a social network
member might want to increase her impact on other users by increasing her own
centrality, or an airport operator wants to increase the appeal of her airport for investors (as measured by the centrality). 
In both cases, natural operations are to introduce new links into the network, i.e., to make new acquaintances or incentivise airlines to offer certain routes. In this work, we hence study the complexity of improving the centrality of a given vertex by introducing new links into the network. Formally, the computational problems that we study are defined as follows.

\decprob{Closeness (Betweenness) Improvement}{An undirected, unweighted graph $G = (V,E)$, a vertex $z \in V$, an integer $k$ and a rational number $r$.}{Is there an edge set $S$, $S \cap E = \emptyset$, of size at most $k$ such that $c_z \geq r$ ($b_z \geq r$) in $G + S :=(V,E \cup S$)?}
We also say that an edge set~$S$ as above is a \emph{solution}.

The above two problems were introduced by
\citet{crescenzi2016greedily} and \citet{d2016maximum}, respectively,
who gave approximation algorithms and showed that their empirical
approximation ratios are close to one on random graphs with up to 100
vertices and up to 1000 vertices, respectively. In a corresponding
presentation \citet{crescenzi2016greedily} noted that finding the
optimal solution for comparison was very time consuming. Here, we
study the parameterized complexity of \CI\ and \BI\ with the ultimate
goal to design efficient exact algorithms. That is, we aim to find
\emph{fixed-parameter (FPT) algorithms} with running time
$f(k) \cdot n^{O(1)}$, where $n$ is the input length and $k$ is some
secondary measure, called \emph{parameter}, or we show W[1] or
W[2]-hardness, meaning that there are presumably no FPT algorithms.

\paragraph{Our Results.} Our results for \CI\ are as follows. 
From two reductions from \DS\ it follows that \CI\ is NP-hard on (disconnected)
planar graphs with maximum degree~3 and W[2]-hard with respect to~$k$, the
number of added edges, on disconnected split graphs, for example
(\cref{corollary:ds-implications}).  Split graphs are a simple model of
core-periphery structure, which occurs in social and biological
networks~\cite{csermely_structure_2013}. In particular, we can derive that a
straightforward $n^{O(k)}$-time algorithm for \CI\ is asymptotically optimal.
Motivated by the fact that social networks often have small diameter in conjunction with small H-index~\cite{eppstein_h-index_2012}, we show that \CI\ remains NP-hard on (connected) graphs of diameter at most~6 and H-index~4 (\cref{theorem:closenesswhard1diameter}). On the positive side, we show that \CI\ allows a fixed-parameter algorithm with respect to the parameter \emph{distance to cluster graph}, that is, the smallest number of vertices to delete in order to obtain a cluster graph. \textsc{Directed Closeness Improvement} is NP-hard and W[2]-hard with respect to~$k$ even if the input graph is acyclic (\cref{theorem:closenesswharddirected}) or has diameter~4 (\cref{theorem:closenessdirecteddiameter}). 

\looseness=-1 For \BI\ the picture is similar. It is W[2]-hard with respect to~$k$ (\cref{theorem:betweennessundirected}) also in the directed case (\cref{theorem:betweennesswharddirected}), NP-hard for graphs of H-index~4 (\cref{corollary:betweennesshindex}), and \BI\ is fixed-parameter tractable with respect to~$k$ and the distance to cluster graph combined. 
\todo[inline]{A point missing in the paper is whether the graphs induced by the NP hardness proofs are practical and in what they represent realistic social network models.

vertex distance to cluster graph does not seem to be very natural and easy to check parameter. 
}

\paragraph{Preliminaries and Notation.}
We use standard notation from graph theory~\cite{reinhard_diestel_graph_2016}. Throughout, we refer to the number of vertices as $n$ and to the number of edges (arcs) as~$m$.  For two vertices $u,v$ we denote by $d(u,v)$ the \emph{distance} between $u$ and $v$, i.e.\ the number of edges on a shortest path from $u$ to~$v$. If $u$ and $v$ are not connected by a path, then $d(u,v) = \infty$. A \emph{split graph} allows for a partition of the vertex set into a clique and an independent set. In a \emph{cluster graph} each connected component is a clique. The \emph{diameter} of a graph is $\infty$ if it is disconnected and the maximum distance of any two vertices otherwise. The \emph{H-index} of a graph is the largest integer~$h$ such that there $h$ vertices of degree at least~$h$. 

\looseness=-1 We also use standard notation from parameterized complexity~\cite{cygan2015parameterized}. Importantly, a \emph{parameterized reduction} from a parameterized problem $L \subseteq \Sigma^* \times \mathbb{N}$ with parameter $k$ to a parameterized problem~$L' \subseteq \Sigma^* \times \mathbb{N}$ with parameter~$k'$ is a
$g(k) \cdot |I|^{\mathcal{O}(1)}$-time computable function~$f: \Sigma^* \times \mathbb{N} \rightarrow \Sigma^* \times \mathbb{N}: (I,k) \rightarrow (I',k')$ such that $k' \leq h(k)$ for some computable function $h$ and $(I,k) \in L \Leftrightarrow (I',k') \in L'$.

The \emph{Exponential Time Hypothesis} roughly states that
satisfiability of a Boolean formula in conjunctive normal form with
clauses of size~3 cannot be decided in $2^{o(n)}$~time, see
\citet{impagliazzo_which_1998,impagliazzo_complexity_1999} for details.

\section{Closeness Centrality}\label{chapter:closeness}

In this section, we present algorithmic and hardness results for
\textsc{Closeness Improvement}. First, we make an important observation that
will help us in our proofs. Intuitively, we show that to improve the closeness
of a vertex by adding edges, it always makes sense to add only edges adjacent to that
vertex.

\begin{lemma}\label{lemma:closenessendpointz}
	Let~$I = (G = (V,E), z,k, r)$ be a \textsc{Closeness Improvement} instance. If~$I$ is a \YI{}, then~$c_z$ can be increased to~$r$ by adding at most~$k$ edges, all of which contain~$z$.  
\end{lemma}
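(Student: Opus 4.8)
The plan is to begin from an arbitrary solution and repeatedly replace every edge that does not contain $z$ by an edge incident to $z$, without ever decreasing $c_z$ or increasing the number of added edges. Concretely, among all edge sets $S$ with $S \cap E = \emptyset$, $|S| \le k$, and $c_z \ge r$ in $G + S$, I would take one that minimizes the number of edges not containing $z$, and then argue by exchange that this number must be $0$. The resulting minimizer is exactly the claimed solution.

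For the exchange step, suppose $S$ still contains an edge $e = \{u,v\}$ with $z \notin e$, and assume without loss of generality that $d_{G+S}(z,u) \le d_{G+S}(z,v)$. The crucial (and slightly counter-intuitive) choice is to connect $z$ to the \emph{farther} endpoint: set $S' = (S \setminus \{e\}) \cup \{\{z,v\}\}$. If $\{z,v\}$ already lies in $E \cup S$, we simply drop $e$, which only decreases $|S|$ and leaves the distances below unaffected.

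The heart of the argument is the claim that $d_{G+S'}(z,w) \le d_{G+S}(z,w)$ for every vertex~$w$. To prove it I would fix a shortest $z$-$w$ path $P$ in $G+S$. If $P$ avoids $e$, it survives verbatim in $G+S'$. Otherwise $P$ uses $e$; since $P$ is a shortest path out of $z$ and $d_{G+S}(z,u) \le d_{G+S}(z,v)$, the edge $e$ must be traversed from $u$ to $v$, which forces $d_{G+S}(z,v) = d_{G+S}(z,u)+1$. Then I would reroute: replace the prefix of $P$ from $z$ up to~$v$ (of length $d_{G+S}(z,v)$) by the single new edge $\{z,v\}$ (of length $1$), and keep the suffix from $v$ to $w$, which does not use $e$ because $P$ is simple. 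The resulting walk lies in $G+S'$ and is no longer than $P$. This is the step I expect to be the main obstacle, both in pinning down the right endpoint (connecting to the \emph{closer} endpoint can cut off the far side and strictly increase some distances) and in cleanly arguing that $e$ is traversed in the away-from-$z$ direction.

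Given the claim, $c_z$ is a sum of the nonincreasing terms $1/d_{G+S}(z,w)$, where vertices unreachable from $z$ contribute $0$ and may only become reachable; hence $c_z \ge r$ is preserved in $G+S'$, while $|S'| \le |S| \le k$ and $S' \cap E = \emptyset$. Since $S'$ has strictly fewer edges missing $z$ than $S$, this contradicts the minimality of $S$, so the minimizer contains only edges incident to $z$. As each swap strictly decreases a nonnegative integer, the process also terminates after at most $k$ steps, so no separate termination argument is needed.
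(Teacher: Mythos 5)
Your proposal is correct and takes essentially the same route as the paper: an exchange argument on an extremal solution in which a non-incident edge $\{u,v\}$ is swapped for an edge from $z$ to the \emph{farther} endpoint $v$, justified by rerouting any shortest path through $e$ via the new edge $\{z,v\}$ so that no distance to $z$ increases. Your write-up is if anything a bit more careful than the paper's (explicit traversal-direction argument for $e$, explicit treatment of the case $\{z,v\} \in E \cup S$ by simply dropping $e$), but the underlying idea is identical.
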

\begin{proof}
  Let~$u, v \in V \setminus \{z\}$ be two vertices of the input graph~$G$
  such that~$e := \{u, v\} \notin E$. Assume for the sake of
  contradiction that $S$ is a solution with the largest number of
  edges incident to~$z$ and that $\{u, v\} \in S$. Consider the
  shortest paths from $u$ to $z$ and from $v$ to $z$ in $G + S$. If
  these paths have the same length, then neither contains~$\{u,
  v\}$. Hence, in this case we have that $c_z$ in $G + S$ equals $c_z$
  in $G + (S \setminus\{u, v\})$. Thus, adding an arbitrary edge to
  $(S \setminus \{u, v\}$ yields a solution with a greater number of
  edges incident to $z$ than~$S$, a contradiction. Hence, one of the
  two shortest paths is shorter than the other; say $u$ is closer to
  $z$ than~$v$. (Observe that, hence, $\{v, z\} \notin E$.) In this
  case, $c_z$ in $G + ((S \setminus \{u, v\}) \cup \{v, z\})$ is at
  least as large as $c_z$ in $G + S$: Consider any shortest path~$P$
  in $G + S$ from some vertex $w$ to $z$ that contained $\{u,
  v\}$. Since $u$ is closer to $z$ than $v$, path~$P$ contains first
  $v$ and then~$u$. Hence, replacing the remaining path after $v$ with
  the direct edge to~$z$ shortens~$P$. Hence,
  $((S \setminus \{u, v\}) \cup \{v, z\})$ contains more edges
  incident to~$z$ and yields $c_z$ which is at least large as for $S$,
  a contradiction. Hence, the solution with the largest number of
  edges incident to~$z$ does not contain any edges not incident
  to~$z$, showing the lemma.
 \end{proof}

From this observation  follows, that if we were to try all possibilities to
 solve \textsc{Closeness Improvement}, it suffices to consider only sets of
 edges to add, where every edge is adjacent to the node whose closeness we want
 to improve. Hence, 
we get an XP algorithm with respect to $k$.

\begin{corollary} \label{corollary:CIXP}
    \textsc{Closeness Improvement} is solvable in $O(n^k \cdot (n + m))$ time where
    $k$ is the number of edge additions, and thus is in XP with respect to the
    parameter number $k$ of edge additions.
\end{corollary}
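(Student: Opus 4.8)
The plan is straightforward given Lemma~\ref{lemma:closenessendpointz}: the lemma tells us that if the instance is a \YI{}, there is a solution in which every added edge is incident to~$z$. So I would restrict the search to such solutions. First I would observe that each edge incident to~$z$ that we might add is of the form~$\{z,v\}$ for some~$v \in V \setminus \{z\}$ with~$\{z,v\} \notin E$. There are at most~$n-1$ such candidate edges, and a solution of the promised form is a subset of these candidates of size at most~$k$.

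Next I would bound the number of candidate subsets to enumerate. Since we only need subsets of size at most~$k$ drawn from at most~$n-1$ candidate edges, there are at most~$\sum_{i=0}^{k} \binom{n-1}{i} = O(n^k)$ such subsets (treating~$k$ as fixed so that the sum is dominated by its largest term). I would iterate over all of them: for each candidate set~$S$, form the graph~$G + S$, recompute~$c_z$, and check whether~$c_z \geq r$. By Lemma~\ref{lemma:closenessendpointz}, the instance is a \YI{} if and only if at least one of these sets yields~$c_z \geq r$, so trying only edges incident to~$z$ loses no generality.

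For the running time I would argue that each iteration costs~$O(n+m)$: computing~$c_z$ in~$G + S$ only requires a single breadth-first search from~$z$ (the graph is unweighted, so distances~$d(z,u)$ are obtained by one BFS), after which summing the~$1/d(z,u)$ terms takes~$O(n)$ time; the BFS itself runs in~$O(n+m)$ time, and adding the at most~$k \leq n$ edges of~$S$ does not change this bound. Multiplying the~$O(n^k)$ iterations by the~$O(n+m)$ per-iteration cost yields the claimed~$O(n^k \cdot (n+m))$ bound, which places \CI{} in XP with respect to~$k$.

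The only subtle point—and the main thing to get right—is the appeal to Lemma~\ref{lemma:closenessendpointz} to justify that restricting to edges incident to~$z$ is sound; without that lemma one would have to search over all~$\binom{\binom{n}{2}}{k}$ edge sets, which is still XP but with a worse exponent, and the stated bound would not follow. Everything else is routine: the enumeration is elementary combinatorics and the per-set evaluation is a standard BFS-based computation. I would therefore present the argument compactly, emphasizing the reduction of the search space via the lemma and the single-BFS evaluation of~$c_z$.
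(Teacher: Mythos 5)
Your proposal is correct and follows essentially the same route as the paper, which derives the corollary directly from Lemma~\ref{lemma:closenessendpointz} via exactly this argument: enumerate the $O(n^k)$ subsets of at most $k$ candidate edges incident to~$z$ and evaluate $c_z$ for each with a single BFS in $O(n+m)$ time. The paper in fact gives no explicit proof beyond this observation, so your write-up simply fills in the routine details it leaves implicit.
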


\paragraph{Hardness Results.} \label{section:closeness_negative}

Next, we present several hardness results for \textsc{Closeness
  Improvement}: They are based on two reductions from \DS; a simple
one and a more intricate one. From results on \DS\ we can then infer
corresponding results for \CI. In particular, we show that the
$n^{O(k)}$-time algorithm from \cref{corollary:CIXP} is essentially
optimal unless the \ETH\ is false.

\begin{figure}[t!]
  \centering
  \hfill
	\begin{tikzpicture}[scale=1,style=transform shape]
	\tikzstyle{knoten}=[circle,draw,fill,minimum size=5pt,inner sep=2pt,style=transform shape]
	\node[knoten,label={[label distance=0cm]90:$u_1$}] (K-1) at (0,0) {};
	\node[knoten,color=red,label={[label distance=0cm]90:$u_2$}] (K-2) at (1,0) {};
	\node[knoten,color=red,label={[label distance=0cm]270:$u_3$}] (K-3) at (0,-1) {};
	\node[knoten,label={[label distance=0cm]270:$u_4$}] (K-4) at (1,-1) {};
	\node[knoten,label={[label distance=0cm]90:$u_5$}] (K-5) at (2,0) {};
	\node[knoten,label={[label distance=0cm]270:$u_6$}] (K-6) at (2,-1) {};
	
	% \node[knoten,color=white,label={[label distance=0cm]270:$$}] (K-55) at (2,-1.8) {};
	% Connect nodes
	\foreach \i / \j in {1/3,2/4,2/5,2/6, 3/4}{
		\path (K-\i) edge[-] (K-\j);
	}
	\end{tikzpicture}
  \hfill
\begin{tikzpicture}[scale=1,style=transform shape]
	\tikzstyle{knoten}=[circle,draw,fill,minimum size=5pt,inner sep=2pt,style=transform shape]
	\node[knoten,label={[label distance=0cm]90:$u_1$}] (K-1) at (0,0) {};
	\node[knoten,label={[label distance=0cm]90:$u_2$}] (K-2) at (1,0) {};
	\node[knoten,label={[label distance=0cm]270:$u_3$}] (K-3) at (0,-1) {};
	\node[knoten,label={[label distance=0cm]270:$u_4$}] (K-4) at (1,-1) {};
	\node[knoten,label={[label distance=0cm]90:$u_5$}] (K-5) at (2,0) {};
	\node[knoten,label={[label distance=0cm]270:$u_6$}] (K-6) at (2,-1) {};
	\node[knoten,label={[label distance=0cm]270:$z$}] (Z) at (4,-.5) {};
	% Connect nodes
	\foreach \i / \j in {1/3,2/4,2/5,2/6, 3/4}{
		\path (K-\i) edge[-] (K-\j);
	}

	\path (Z) edge[-,color=red, dashed] (K-2);
	\path (Z) edge[-,color=red, dashed] (K-3);
	\end{tikzpicture}
  \hfill\mbox{}
  \caption{Reduction from \DS{} to \textsc{Closeness Improvement}. Left: A \DS{} instance $I = (G,k=2)$. The red colored vertices form a solution for $I$. Right: The constructed \textsc{Closeness Improvement} instance ($I' = G',z,k'=2,r = k+\frac{n-k}{2}$). The red dashed edges form a solution for $I'$.}
  \label{figure:red_dom_closeness1}
\end{figure}
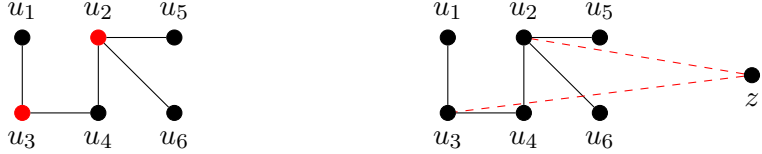

\begin{theorem}\label{theorem:closenesswhard1}
  \textsc{Closeness Improvement} is NP-hard and W[2]-hard with respect to the number~$k$ of edge additions (on disconnected graphs). Moreover, unless the \ETH\ fails, \CI\ does not allow an algorithm with running time~$f(k)\cdot n^{o(k)}$.
\end{theorem}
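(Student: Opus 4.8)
The plan is to reduce from \DS\ parameterized by the solution size~$k$, which is W[2]-hard and, unless the \ETH\ fails, admits no $f(k)\cdot n^{o(k)}$-time algorithm. Since every step below runs in polynomial time and maps the \DS-parameter~$k$ to exactly $k'=k$, establishing the equivalence of instances yields all three claimed lower bounds at once. I may assume $k<n$, as otherwise the \DS-instance is trivially a \YI{} (the whole vertex set dominates) and can be sent to a fixed \YI{} of \CI.

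Given a \DS-instance $(G=(V,E),k)$ with $n=|V|$, I construct the \CI-instance $(G',z,k',r)$ as depicted in \cref{figure:red_dom_closeness1}: let $G'$ be $G$ together with one new isolated vertex~$z$, set $k'=k$, and choose the target $r = k + \tfrac{n-k}{2}$. Note that $G'$ is disconnected, as the theorem allows. For the forward direction, suppose $D\subseteq V$ is a dominating set; after padding with arbitrary vertices I may assume $|D|=k$ (possible since $k<n$). Adding $S=\{\{z,v\}:v\in D\}$ makes each vertex of~$D$ a neighbour of~$z$, contributing~$1$, while every remaining vertex is dominated and hence at distance exactly~$2$ from~$z$, contributing~$\tfrac12$. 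This gives $c_z = k + \tfrac{n-k}{2} = r$, so $(G',z,k',r)$ is a \YI.

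The crux is the converse, where \cref{lemma:closenessendpointz} does the heavy lifting: if the \CI-instance is a \YI, then there is a solution~$S$ all of whose edges are incident to~$z$. Let $D$ be the set of the other endpoints, so $|D|=|S|\le k$ and the neighbours of~$z$ are exactly~$D$. Each vertex of~$D$ contributes exactly~$1$, and since $z$ reaches any other vertex only through~$D$, every vertex outside~$D$ lies at distance at least~$2$ and contributes at most~$\tfrac12$. This yields the clean bound
\[
  c_z \;\le\; |D| + \tfrac12\bigl(n-|D|\bigr) \;=\; \tfrac{n}{2} + \tfrac{|D|}{2} \;\le\; \tfrac{n}{2} + \tfrac{k}{2} \;=\; r,
\]
where equality in the first inequality forces every vertex outside~$D$ to be at distance exactly~$2$, i.e. dominated by~$D$, and equality in the second forces $|D|=k$. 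Hence $c_z\ge r$ can hold only if $D$ is a dominating set of size~$k$, so $G$ admits a dominating set of size at most~$k$.

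I expect the only subtlety to be this tightness analysis: I must confirm that an undominated vertex strictly lowers the sum (it is at distance at least~$3$, or unreachable, contributing strictly less than~$\tfrac12$) and that spending fewer than~$k$ edges cannot compensate for this loss, both of which are already captured by the two inequalities above. Finally, since the reduction is polynomial and preserves the parameter linearly, with $n'=n+1$ so that $n'^{\,o(k')}=n^{o(k)}$, NP-hardness, W[2]-hardness with respect to~$k$, and the ruled-out $f(k)\cdot n^{o(k)}$ running time all transfer from \DS\ to \CI.
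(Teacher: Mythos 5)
Your proposal is correct and is essentially the paper's own proof: the same reduction (add an isolated vertex~$z$, keep $k'=k$, set $r = k + \tfrac{n-k}{2}$), the same reliance on \cref{lemma:closenessendpointz}, and the same distance counting, with the only cosmetic difference that you argue the converse directly via a tight inequality ($c_z \le \tfrac{n}{2} + \tfrac{|D|}{2}$ with equality forcing domination) while the paper argues by contraposition using the $\ell \ge 1$ vertices at distance $\ge 3$. Your added care about padding the dominating set to size exactly~$k$ and the trivial case $k \ge n$ is a harmless refinement of the same argument.
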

  \begin{proof}
    We give a reduction from \DS{} which is W[2]-hard and does not admit an algorithm with running time~$f(k)\cdot n^{o(k)}$ unless the \ETH\ is false~\cite{cygan2015parameterized}. Let~$I = (G = (V,E), k \in \mathbb{N})$ where~$V = \{u_1,\ldots,u_n\}$. We construct a \textsc{Closeness Improvement} instance~$I' = (G' = (V',E'), z, k, k + \frac{1}{2}(n-k))$ as follows: Given the input graph~$G$, we simply add an isolated vertex $z$ to the graph, that is $G' = (V \cup \{z\}, E)$.
    
   We now show that the reduction is correct, i.e.~$I$ is a \YI{} if and only if~$I'$ is a \YI{}:
    
   $\Rightarrow$: Let there be a dominating set~$V_{DS} \subseteq V$ of size~$k$ in~$G$. After adding~$k$ edges between~$z$ and each vertex in~$V_{DS}$ in~$G'$, these~$k$ vertices have distance 1 to~$z$ and the~$n-k$ neighbors of the vertices~$V_{DS}$ have distance 2 to~$z$. Hence,~$c_z = k + \frac{n-k}{2}$. That is, $I'$~is a \YI{}.
    
   $\Leftarrow$: The reverse direction is by contraposition. 
   Assume that there is no dominating set of size~$k$ in~$G$. \cref{lemma:closenessendpointz} shows that we can maximally increase $c_z$ by adding edges where one endpoint is $z$. However, after adding~$k$ edges between~$z$ and $k$ other vertices in~$G'$, there are $\ell \geq 1$ vertices in~$G'$ whose distances to~$z$ is $d \geq 3$. Hence,~$c_z \leq k + \frac{n-k-\ell}{2} + \frac{\ell}{d} < k +  \frac{n-k}{2}$. That is,~$I'$ is a \NI{}.
 \end{proof}
  \begin{corollary}\label{corollary:ds-implications}
    \CI\ is
  \begin{compactenum}
  \item NP-hard even on disconnected planar graphs with maximum degree~3,
  \item NP-hard and W[2]-hard on disconnected split graphs, and
  \item NP-hard and W[2]-hard on disconnected graphs in which each connected component has diameter two.
  \end{compactenum}
\end{corollary}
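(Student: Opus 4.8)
The plan is to treat the reduction underlying \cref{theorem:closenesswhard1} as a black box that transfers structural restrictions on the \DS{} instance to the produced \CI{} instance, and then to feed it with \DS{} instances drawn from three well-studied restricted graph classes. Recall that the reduction maps $(G,k)$ to $(G', z, k, r)$ where $G'$ is simply $G$ together with a single \emph{isolated} vertex~$z$. Since $z$ has degree zero, $G'$ inherits essentially every structural property of $G$, the only change being that $G'$ becomes (or stays) disconnected. I would first record this transfer explicitly: $G'$ is planar whenever $G$ is; the maximum degree of $G'$ equals that of $G$; $G'$ is a split graph whenever $G$ is (place $z$ on the independent-set side, which stays independent); and every connected component of $G'$ has diameter at most two whenever every component of $G$ does (the new component $\{z\}$ has diameter $0$). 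Crucially, correctness of the reduction itself is untouched, as it depends only on the domination structure and not on the host graph class, so I can reuse \cref{theorem:closenesswhard1} verbatim once the structural bookkeeping is in place.

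With this transfer in hand, each item follows by invoking the appropriate hardness of \DS{}. For the first item I would start from the classical NP-hardness of \DS{} on planar graphs of maximum degree~$3$; applying the reduction yields a disconnected planar max-degree-$3$ \CI{} instance, giving NP-hardness. No $W$-hardness is asserted here, which is consistent with planar \DS{} being fixed-parameter tractable, so there is nothing more to prove for this item. For the second item I would start from the $W[2]$-hardness of \DS{} on split graphs---this is essentially the canonical \SC{}-to-\DS{} reduction, whose output is a split graph---and, after adding the isolated $z$ to the independent part, obtain NP- and $W[2]$-hardness on disconnected split graphs with parameter $k' = k$. For the third item I would invoke the $W[2]$-hardness of \DS{} on graphs of diameter two; the reduction then produces a disconnected graph consisting of one diameter-two component plus the isolated vertex $z$, so every component has diameter at most two.

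The structural bookkeeping is routine, so the only genuine obstacle is securing the base hardness result for the third item: the textbook \SC{}-to-\DS{} construction yields a split graph of diameter~\emph{three} (two element vertices sharing no set sit at distance three), and the tempting fixes that shrink this distance---adding edges among the element vertices, or attaching a near-universal apex---collapse the domination number and destroy the reduction. I would resolve this by citing the known $W[2]$-hardness of \DS{} restricted to diameter-two graphs rather than re-deriving it; alternatively, if a self-contained argument is preferred, I would augment the split-graph construction with auxiliary vertices that furnish a common neighbour for every pair of element vertices while themselves being both hard to dominate and useless as dominators, so that the minimum dominating set still equals the minimum set cover up to an additive constant. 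Once this single ingredient is fixed, all three items drop out immediately from the transfer observation together with \cref{theorem:closenesswhard1}.
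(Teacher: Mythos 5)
Your proposal is correct and follows essentially the same route as the paper: both treat the isolated-vertex reduction of \cref{theorem:closenesswhard1} as a structure-preserving black box and instantiate it with the same three hard \DS{} variants (NP-hardness on planar max-degree-$3$ graphs, W[2]-hardness on split graphs via the standard \SC{} reduction, and the known W[2]-hardness on diameter-two graphs, which the paper likewise cites rather than re-derives). Your explicit bookkeeping---placing $z$ in the independent side of the split partition and noting the component $\{z\}$ has diameter $0$---is a slightly more careful writeup of exactly what the paper leaves implicit.
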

\begin{proof}
  Instead of the plain \DS\ problem, we reduce from special cases which have been shown to be hard in the literature.   
  For the first result, we reduce from \DS\ on planar graphs with maximum degree~3~\cite{Garey:1990:CIG:574848}. Clearly, the reduction in \cref{theorem:closenesswhard1} does not destroy planarity or increase the maximum degree.

  For the second result, we use the standard reduction from \SC\ to \DS\ in split graphs, which identifies the size of the sought set cover and the size of the sought dominating set~\cite{bertossi_dominating_1984}. Since \SC\ is W[2]-hard with respect to the size of the sought set cover~\cite{downey2012parameterized}, so is \DS\ on split graphs with respect to the size of the sought dominating set. Hence, the same is true for \CI\ on split graphs by the reduction in \cref{theorem:closenesswhard1}.

  For the third result, we use the fact that \DS\ is W[2]-hard on graphs of diameter~2~\cite{DBLP:conf/iwpec/LokshtanovMPRS13}.
\end{proof}

\noindent Applications in social networks, which have both small diameter and small H\nobreakdash-index\footnote{Recall that the H-index of a graph is the largest integer~$h$ such that there are $h$ vertices of degree at least~$h$.}~\cite{eppstein_h-index_2012}, motivate the following more special hardness result where both these values are small constants. It also shows that \CI\ remains hard on connected graphs, which was left open by \cref{theorem:closenesswhard1}. 

\begin{figure}[t!]
  \centering
  \hfill
  \raisebox{-0.5\height}{
	\begin{tikzpicture}[scale=1,style=transform shape]
	\tikzstyle{knoten}=[circle,draw,fill,minimum size=5pt,inner sep=2pt,style=transform shape]
	\node[knoten,label={[label distance=0cm]90:$u_1$}] (K-1) at (0,0) {};
	\node[knoten,color=red,label={[label distance=0cm]90:$u_2$}] (K-2) at (1,0) {};
	\node[knoten,color=red,label={[label distance=0cm]270:$u_3$}] (K-3) at (0,-1) {};
	\node[knoten,label={[label distance=0cm]270:$u_4$}] (K-4) at (1,-1) {};
	\node[knoten,label={[label distance=0cm]90:$u_5$}] (K-5) at (2,0) {};
	\node[knoten,label={[label distance=0cm]270:$u_6$}] (K-6) at (2,-1) {};
	
	\foreach \i / \j in {1/3,1/2,2/4,2/5,2/6, 3/4}{
		\path (K-\i) edge[-] (K-\j);
	}
      \end{tikzpicture}
      }
      \hfill
      \raisebox{-0.5\height}{
\begin{tikzpicture}[scale=1,style=transform shape]
	\tikzstyle{knoten}=[circle,draw,fill,minimum size=5pt,inner sep=2pt,style=transform shape]
	\node[knoten,label={[label distance=0cm]90:$u_1$}] (K-1) at (0,0) {};
	\node[knoten,label={[label distance=0cm]90:$u_2$}] (K-2) at (1,0) {};
	\node[knoten,label={[label distance=0cm]270:$u_3$}] (K-3) at (0,-1) {};
	\node[knoten,label={[label distance=0cm]270:$u_4$}] (K-4) at (1,-1) {};
	\node[knoten,label={[label distance=0cm]90:$u_5$}] (K-5) at (2,0) {};
	\node[knoten,label={[label distance=0cm]270:$u_6$}] (K-6) at (2,-1) {};
	\node[knoten,label={[label distance=0cm]270:$z$}] (Z) at (7,-.5) {};
	% Connect nodes
	\foreach \i / \j in {1/2,1/3,2/4,2/5,2/6, 3/4}{
		\path (K-\i) edge[-] (K-\j);
	}
	
	\foreach \i in {1,2,3,4,5,6}{
		\node[knoten,label={[label distance=0cm]270:$x_\i$}] (X-\i) at (4, -\i+3) {};
	}
	
	\foreach \i in {1,2,3,4,5,6}{
		\node[knoten,label={[label distance=0cm]270:$y_\i$}] (Y-\i) at (5, -\i+3) {};
	}
	
	\foreach \i in {1,2,3,4,5,6}{
		\path (K-\i) edge[-,bend left] (X-\i);
		\path (X-\i) edge[-] (Y-\i);
		\path (Y-\i) edge[-] (Z);
	}
	
	\path (Z) edge[-,color=red, dashed] (K-2);
	\path (Z) edge[-,color=red, dashed] (K-3);
	\end{tikzpicture}
        }
  \hfill\mbox{}
    \caption{Parameterized reduction from \DS{} to \textsc{Closeness Improvement} on graphs with diameter 4. Left: A \DS{} instance $I = (G,k=2)$. The red colored vertices $u_2$ and $u_3$ form a solution for $I$. Right: The constructed \textsc{Closeness Improvement} instance ($I' = G',z,k'=2,2n+\frac{k}{2}$). The red, dashed edges form a solution for $I'$.}
    \label{figure:red_dom_closeness}
\end{figure}
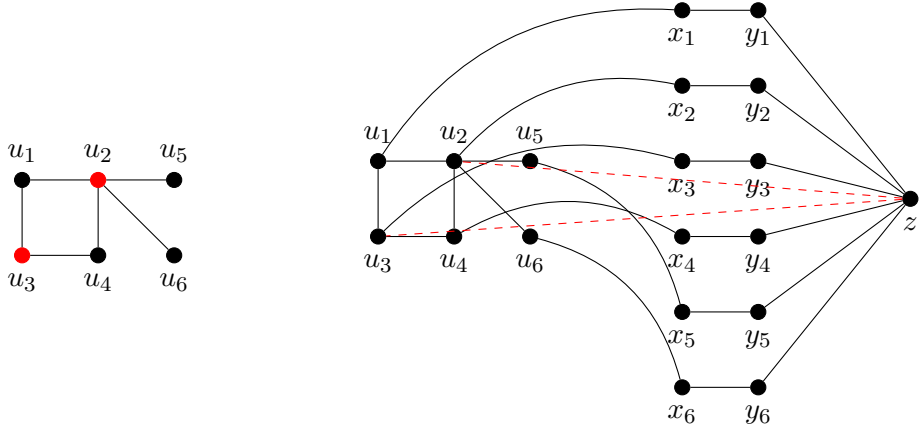

\begin{theorem}
  \label{theorem:closenesswhard1diameter}
  \textsc{Closeness Improvement} is NP-hard and W[2]-hard with respect to the parameter number~$k$ of edge additions even on connected graphs with diameter~4. Moreover, \CI\ is NP-hard even on graphs which simultaneously have diameter~6 and H-index~4.
\end{theorem}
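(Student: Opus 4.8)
The plan is to prove both statements with a single parameterized reduction from \DS, realized by the gadget in \cref{figure:red_dom_closeness}, and to obtain the two different graph classes by starting from two different restrictions of \DS. Given a \DS{} instance $(G = (V,E), k)$ with $V = \{u_1, \dots, u_n\}$, I construct $G'$ by adding the target vertex $z$ together with, for every $i$, two fresh vertices $x_i, y_i$ and the path $u_i$--$x_i$--$y_i$--$z$, while keeping the original edges of $G$ among the $u_i$. I set $k' = k$ and require closeness $r = 2n + k/2$. Before any solution edges are added, each $y_i$ is at distance $1$, each $x_i$ at distance $2$, and each $u_i$ at distance $3$ from $z$.

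For correctness I argue as follows. In the forward direction, given a dominating set $D$ of size $k$, I add the edges $\{z,u_i\}$ for $u_i \in D$; then the $k$ vertices of $D$ are at distance $1$ and, by domination, all remaining $u_i$ are at distance $2$, while the $x_i, y_i$ keep their distances, so a direct count gives $c_z = n + n/2 + k + (n-k)/2 = 2n + k/2$. For the reverse direction I invoke \cref{lemma:closenessendpointz} to assume every added edge is incident to $z$; an edge to some $y_i$ shortens no distance and may be discarded, so a solution is captured by $A = \{i : \{z,u_i\}\text{ added}\}$ and $B = \{i : \{z,x_i\}\text{ added}\}$ with $|A|+|B| \le k$. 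The only distances that change are $d(x_i,z)$ (which is $1$ exactly for $i \in B$, else $2$) and $d(u_i,z)$ (which is $1$ for $i \in A$, is $2$ whenever $i \in B$ or $u_i$ has a $G$-neighbour in $\{u_j : j \in A\}$, and is $3$ otherwise). Writing $R$ for the number of $u_i$ still at distance $3$ and using $1/3 \le 1/2$, a short computation gives $c_z \le 2n + (|A|+|B|)/2 \le 2n + k/2$, with equality only if $R=0$ and $|A|+|B| = k$. Hence $c_z \ge r$ forces every $u_i$ to lie within distance $2$ of $z$, which is exactly the statement that $\{u_i : i \in A \cup B\}$ is a dominating set of $G$ of size at most $k$. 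This establishes the equivalence, and since $k' = k$ the W[2]-hardness of \DS\ transfers to \CI.

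It then remains to control the two structural parameters through the choice of source problem. For the diameter-$4$ claim I reduce from \DS\ on graphs of diameter $2$, which is NP-hard and W[2]-hard~\cite{DBLP:conf/iwpec/LokshtanovMPRS13}; here every $u_i$--$u_j$ pair is within distance $2$ inside $G$, and a case check over vertex types (the $z$-side already yields $d(x_i,x_j) \le 4$, $d(y_i,u_j) \le 4$, while $d(x_i,u_j)$ stays small through $G$) shows that every pair of $G'$ is within distance $4$, so $G'$ is connected with diameter exactly $4$. For the second claim I instead reduce from \DS\ on graphs of maximum degree $3$, which is NP-hard~\cite{Garey:1990:CIG:574848}; now distinct $u_i$--$u_j$ may be reachable only through $z$, giving diameter at most $6$, while every $u_i$ has degree at most $4$ and every $x_i, y_i$ degree $2$, so $z$ is the unique vertex of degree at least $5$ and hence the H-index of $G'$ is at most $4$.

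The main obstacle is the reverse direction of the correctness argument: one must verify that no clever mixture of edges to the $u_i$ and to the $x_i$ (or $y_i$) can beat a dominating-set solution, which is precisely what the bound $c_z \le 2n + (|A|+|B|)/2$ together with its tight equality case accomplishes. The remaining work---the distance case analysis bounding the diameter by $4$ respectively $6$, and the degree bookkeeping for the H-index---is routine but must be carried out for each pair of vertex types.
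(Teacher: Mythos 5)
Your construction is exactly the paper's gadget (pendant paths $u_i$--$x_i$--$y_i$--$z$, same budget $k'=k$, same target $r = 2n+\tfrac{k}{2}$) and the same two source problems drive the two claims, so the overall route coincides; the difference lies in how you verify the reverse direction. The paper first runs an exchange argument (any solution edge $\{z,x_i\}$ can be replaced by $\{z,u_i\}$ without decreasing $c_z$) and only then computes $c_z$ under the assumption that all added edges go to $U'$, by contraposition. You instead keep both edge types and derive the exact closed form $c_z = 2n + \tfrac{|A|+|B|}{2} - \tfrac{R}{6}$, where $R$ counts the vertices $u_i$ still at distance~$3$; this single identity subsumes the exchange step (it shows an $x_i$-edge is never better than a $u_i$-edge, since both contribute $\tfrac{1}{2}$ to the budget term while only $u_i$-edges can reduce $R$ for other vertices), yields both directions of the equivalence at once, and makes the equality case ($R=0$, $|A|+|B|=k$) explicit, which is arguably tighter than the paper's write-up. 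One further point in your favor: for the H-index claim you reduce from \DS{} on graphs of maximum degree~$3$, whereas the paper's text says maximum degree~$4$; your choice is the internally consistent one, since it is exactly what makes every $u_i$ have degree at most~$4$ in $G'$ (the paper's own justification, ``maximum degree~4 apart from $z$,'' presupposes degree~$3$ in $G$, so you have silently corrected a slip). Two cosmetic remarks: edges $\{z,y_i\}$ already exist in $G'$, so they cannot be added at all (your ``may be discarded'' is harmless but unnecessary), and claiming diameter \emph{exactly}~$4$ requires exhibiting a distance-$4$ pair, though ``at most~$4$'' suffices for the theorem.
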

\begin{proof}
    The proof is by a parameterized reduction from two variants of \DS{}, detailed below. Let~$I = (G = (V,E), k \in \mathbb{N})$ be a
    \DS{} instance where~$V = \{u_1,\ldots,u_n\}$. We construct a \textsc{Closeness Improvement} instance~$I'
    = (G' = (V',E'), z, k, 2n + \frac{k}{2})$ as follows (see
    \cref{figure:red_dom_closeness}): Given the input graph~$G$, we add~$2n$
    vertices~$x_1,\ldots,x_n,y_1,\ldots,y_n$ such that each vertex~$x_i$ is adjacent
    to~$u_i$ and~$y_i$. Furthermore, we add~$z$ and add edges between~$z$ and
    each~$y_1,\ldots,y_n$. Formally, $V' =V \cup \{x_i,y_i \mid 1 \leq i \leq n\} \cup \{z\}$ and~$E' = E \cup \{\{u_i,x_i\}, \{x_i,y_i\}, \{z,y_i\} \mid 1 \leq i \leq n\}$. 
    
    We partition~$V'$ into the subsets~$Y' := \{y_1,\ldots,y_n\}, X'
    := \{x_1,\ldots,x_n\}$ and~$U' :=
    \{u_1,\ldots,u_n\}$. Note that the vertices
    in~$Y'$ have distance 1 to~$z$, the vertices
    in~$X'$ have distance 2 to~$z$ and the vertices
    in~$U'$ all have distance 3
    to~$z$.  This completes the construction which can clearly be
    carried out in polynomial time.

    Suppose the reduction is correct. To get NP-hardness and
    W[2]-hardness with respect to~$k$ on diameter~4 graphs, we reduce
    from \DS\ on graphs of diameter two, which is
    NP-hard~\cite{ambalath_kernelization_2010} and W[2]-hard with
    respect to~$k$~\cite{DBLP:conf/iwpec/LokshtanovMPRS13}. It is not
    hard to see that the resulting graph~$G'$ indeed has
    diameter~$4$. To get NP-hardness on graphs with simultaneously
    diameter~$6$ and H-index~4, reduce instead from \DS\ on graphs~$G$
    with maximum degree~4. By the connections via~$x_i$, $y_i$, and $z$,
    any two vertices of~$G$ are connected in~$G'$ by a path of length at
    most~6. Graph~$G'$ has H-index $4$, because it has maximum degree~4 apart from~$z$.
    
First, we show that adding edges
    between~$z$ and vertices in~$U'$ is optimal:
    
    Assume an edge~$\{z,x_i\}, x_i \in X'$, is added. Then the distance
    between~$z$ and~$x_i$ is 1 and the distance between~$z$ and~$u_i$ is 2. If
    we instead add the edge~$\{z,u_i\}$, then the distance between~$z$
    and~$u_i$ is 1 and the distance between~$z$ and~$x_i$ remains 2.
    Furthermore, the edge~$\{z,u_i\}$ may introduce shorter distances to the
    neighbors of~$u_i$, which the edge~$\{z,x_i\}$ does not. Hence, if a
    solution for~$I'$  contains~$\{z,x_i\}$, we can replace that edge
    by~$\{z,u_i\}$.
    
It remains to show that the reduction is correct, i.e.~$I$ is a \YI{} if and only if~$I'$ is a \YI{}:
    
    $\Rightarrow$: Let~$I$ be a \YI{}. Then there is a dominating set~$U_{DS}'
    \subseteq U'$ of size~$k$ for~$G' - (X'\cup Y')$. After adding~$k$ edges
    between~$z$ and each vertex in~$U_{DS}'$, these~$k$ vertices have distance 1
    to~$z$ and the~$n-k$ neighbors in~$U' \setminus U_{DS}'$ have distance 2
    to~$z$. Furthermore, each vertex in~$Y'$ has distance 2 to~$z$ and each vertex
    in~$X$ has distance 1 to~$z$.
    
    Hence,~$c_z = k + \frac{n-k}{2} + \frac{n}{2} + n = 2n + \frac{k}{2}$. That
    is,~$I'$ is a \YI{}.
    
    $\Leftarrow$: We prove the other way by contraposition. That is, we show
    that~$I'$ is a \NI{} if~$I$ is a \NI{}. Let~$I$ be a \NI{}. If there is no
    dominating set of size~$k$ in~$G' - (X' \cup Y')$, then after adding~$k$
    edges between~$z$ and vertices in~$U'$, there are~$l \geq 1$ vertices in~$U'$ whose distances to~$z$ is still 3.
    
    Hence,~$c_z = k + \frac{n-k-l}{2} + \frac{l}{3} + \frac{n}{2} + n  = 2n +
    \frac{k}{2} - \frac{l}{6}$, for~$l \geq 1$. That is,~$I'$ is a \NI{}. 
  \end{proof}
We note that it is not hard to show that \CI\ is polynomial-time solvable on graphs of diameter~2. The case of diameter~3 remains open.

\paragraph{Algorithmic Result.}
Now we present an algorithm for \textsc{Closeness Improvement}, which shows that
the problem is fixed-parameter tractable when parameterized by the distance of
the input graph to a cluster graph.

\begin{theorem} \label{theorem:distancetoclustercloseness}
    \textsc{Closeness Improvement} can be solved in~$2^{2^{2^{O(\ell)}}} \cdot
    n^{O(1)}$~time, where~$\ell$ is the vertex deletion distance of~$G$ to a
    cluster graph, and thus is in FPT with respect to the parameter~$\ell$.
\end{theorem}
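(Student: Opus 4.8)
The plan is to exploit the cluster-graph modulator together with the reduction to edges incident to $z$ from \cref{lemma:closenessendpointz}. First I would compute a vertex set $W$ with $|W| = \ell$ whose deletion turns $G$ into a cluster graph; this is exactly \textsc{Cluster Vertex Deletion} and can be done in FPT time (e.g.\ by branching on induced paths on three vertices). After adding $z$ to $W$ we have $|W| \le \ell + 1$ and $G - W$ is a disjoint union of cliques $C_1, \dots, C_p$. By \cref{lemma:closenessendpointz} it then suffices to look for a solution consisting of at most $k$ edges all incident to $z$, i.e.\ to select a set $A \subseteq V \setminus \{z\}$ with $|A| \le k$ of vertices to be joined to $z$ and to decide whether some such $A$ achieves $c_z \ge r$.

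Next I would establish two structural facts that make the search tractable. The first is that all finite distances from $z$ are bounded by $O(\ell)$: a shortest path from $z$ visits at most $|W|$ vertices of $W$, and between two consecutive modulator vertices it stays inside a single clique (the only inter-clique connections pass through $W$), contributing at most three edges; hence $d(z, v) \le O(\ell)$ for every $v$ that $z$ can reach. The second is a two-value property of cliques: if $D_i := \min_{v \in C_i} d(z, v)$, then every vertex of $C_i$ lies at distance $D_i$ or $D_i + 1$ from $z$, since a clique has diameter one. Consequently the contribution of $C_i$ to $c_z$ is $a_i / D_i + (|C_i| - a_i)/(D_i + 1)$, where $a_i$ is the number of vertices of $C_i$ at distance exactly $D_i$; this reduces the effect of the solution on each clique to the two integers $D_i$ and $a_i$.

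Then I would turn these facts into a bounded enumeration plus an integer program. I would classify each clique vertex by its neighborhood in $W$, giving $2^{O(\ell)}$ vertex types, and classify each clique by the combination of vertex types it contains, giving at most $2^{2^{O(\ell)}}$ clique types; cliques of the same type are interchangeable for the purpose of computing $c_z$. Since distances from $z$ to the modulator are bounded by $O(\ell)$, I would guess the distance vector $\delta \colon W \to \{0, 1, \dots, O(\ell)\}$ of the final graph, of which there are only $2^{O(\ell \log \ell)}$ choices. For a fixed $\delta$, the value $D_i$ and the achievable values of $a_i$ for each clique are determined by the clique's type and by how many vertices of each type inside it are joined to $z$, so the cliques decouple and the remaining task is to allocate the budget $k$ across clique types so as to maximize $c_z$ while staying consistent with $\delta$. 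I would encode this as an integer linear program whose variables count, for each clique type and each admissible way of connecting its vertices, how many cliques are treated that way; the number of variables is bounded by a function of $\ell$ (roughly $2^{2^{O(\ell)}}$), so Lenstra's algorithm solves it in time $d^{O(d)} \cdot n^{O(1)}$ with $d = 2^{2^{O(\ell)}}$, which yields the claimed $2^{2^{2^{O(\ell)}}} \cdot n^{O(1)}$ bound.

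The main obstacle is the coupling between cliques through the modulator: joining $z$ to a vertex of one clique can create a shortcut to a modulator vertex and thereby shorten the distance from $z$ into other cliques, so the clique contributions are not independent. Guessing the distance vector $\delta$ is what breaks this dependency, but it forces two further technical points that I expect to be the delicate part of the proof: verifying that a guessed $\delta$ is actually realizable (every claimed distance is attained by some path and no connection drives a distance below its guessed value), and encoding the per-clique gain as a function of the connection pattern that is linear in the integer variables so that Lenstra's algorithm applies. Handling arbitrarily large cliques and large clique counts without blowing up the running time is exactly what the integer program buys us, since $k$ itself is not a parameter here.
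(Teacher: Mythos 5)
You have the right skeleton (FPT modulator computation, \cref{lemma:closenessendpointz}, the two-value distance property within cliques, bounding distances by $O(\ell)$, and typing vertices by their neighborhoods in the modulator), and your route via guessing the distance vector $\delta$ on $W$ plus an integer program is genuinely different from the paper's. But as written there is a real gap in the ILP step, in two places. First, your claim that ``cliques of the same type are interchangeable for the purpose of computing $c_z$'' is false: the type records only the \emph{set} of vertex types present, while the contribution $a_i/D_i + (|C_i|-a_i)/(D_i+1)$ depends on $|C_i|$ and on the \emph{multiplicities} of vertex types, which are unbounded in $\ell$. Consequently the objective coefficients of your variables (``how many cliques of this type get this treatment'') are not well-defined. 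Second, and more seriously, since $k$ is not a parameter here, a ``treatment'' cannot just say which vertex types get connected: connecting a second, third, \dots{} vertex of the same type in the same clique still strictly increases $c_z$ (each such vertex moves from distance $\ge 2$, or from being unreachable, to distance $1$), so an optimal solution must decide \emph{how many} vertices of each type to connect, and that number ranges up to $k$. If treatments carry these counts, the number of variables is no longer bounded by a function of $\ell$ alone, which breaks the Lenstra-based running-time claim; if they do not, the program cannot represent how leftover budget is spent, and spending it in large or unreachable cliques is where the gain lies. Both defects are repairable --- one can show contributions are monotone in clique size so largest cliques are taken first, making the gain of $x$ cliques per (type, treatment) a concave piecewise-linear function encodable with auxiliary variables, and one can add per-type counting variables with uniform gain $1 - 1/(D+1)$ for surplus connections --- but these exchange/monotonicity arguments are exactly the technical core, and your proposal asserts rather than proves them. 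The realizability of $\delta$, which you flag, is by comparison routine (it is expressible through the types connected to $z$).

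For contrast, the paper avoids the ILP entirely. It enumerates the \emph{signature of the solution}: for each touched cluster, the cluster's neighborhood in $V_{\mathrm{VDS}} \cup \{z\}$ together with the set of neighborhoods of the solution vertices inside it --- at most $2^{2^{2^{O(\ell)}}}$ possibilities, which is where the triple exponential comes from. Two exchange arguments then show that, for a fixed signature, it is optimal to realize each signature element in the largest eligible cluster (``most potent''), to fill the remainder greedily in decreasing cluster size, and to spend any leftover budget on arbitrary vertices; the candidate solution for each guessed signature is then a \emph{concrete} vertex set whose closeness value is computed exactly on the actual graph in polynomial time, so multiplicities and clique sizes never need to enter a symbolic objective. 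In effect, the paper proves by hand the monotonicity facts your ILP would need, and then dispenses with Lenstra altogether; your approach, once patched with those same facts, would likely also work and has the appeal of decoupling cliques explicitly via $\delta$, but it is not simpler and currently does not go through as stated.
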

\newcommand{\VDS}{\textsf{VDS}}
\newcommand{\sig}{\textsf{sig}}
\begin{proof}
  \looseness=-1 Let~$(G,z,k,r)$ be a \textsc{Closeness Improvement} instance, where $V_{\VDS} \subset V$~is a vertex set of size~$\ell$ such that~$G_C = (V_C,E_C) := G - V_{\VDS}$ is a cluster graph with the set of connected components~$\C = \{C_1, \ldots, C_s\}$ which we also call \emph{clusters}. Since a cluster vertex deletion set~$V_{\VDS}$ of size~$\ell$ can be found in $O(1.92^\ell\cdot(n + m))$~time if it exists~\cite{boral_fast_2016,huffner2010fixed}, we may assume that $V_{\VDS}$ is given. By \cref{lemma:closenessendpointz} we may assume that the edges in an optimal solution~$E^*$ to $(G, z, k, r)$ all have endpoint~$z$. Hence, in the following we denote by a solution~$V'$ the endpoints different from~$z$ of the corresponding edge set. Any solution can thus be divided into vertices in $V_{\VDS}$ and those in $V \setminus V_\VDS$. Let $V^*_{\VDS}$ be the intersection of an optimal solution~$V^*$ with $V_\VDS$. The first step in our algorithm is to find~$V^*_\VDS$, by trying all $2^\ell$ possibilities. It remains to determine $V^* \setminus V_\VDS$. Intuitively, if there are vertices which have the same neighborhood in $V_\VDS$ and are in clusters that also have the same neighborhood in $V_\VDS$, then each such vertex after the first one does not help to shorten distances to $z$ for any vertex except itself. Hence, if we know that the optimal solution contains vertices in clusters both with some specified neighborhood in~$V_\VDS$, then we can assume that these vertices are distributed among the largest clusters with that neighborhood. In the algorithm we thus first determine for which neighborhoods in $V_\VDS$ there are clusters and vertices in these clusters in the optimal solution. Then we distribute the vertices in the solution optimally among the chosen neighborhoods. The proof that this yields an optimal solution is unfortunately technical and we need the following notation.

  We say that the \emph{signature $\sig(C_i)$ of a cluster}~$C_i$, $i = 1, \ldots, s$, is the set of neighbors in~$V_\VDS \cup \{z\}$ of vertices in $C_i$, that is, the signature is $\{v \in V_\VDS \cup \{z\} \mid \exists u \in C_i \colon \{u, v\} \in E\}$. Similarly, the \emph{signature $\sig(v)$ of a vertex} $w \in V \setminus V_\VDS$ is $N(v) \cap (V_\VDS \cup \{z\})$. For some subset $V_i \subseteq C_i$ of some cluster~$C_i \in \C$ denote by the \emph{signature $\sig(V_i)$ of $V_i$} the tuple $(\sig(C_i), \{\sig(v) \mid v \in V_i\})$. Say also that $C_i$ is \emph{$V_i$'s cluster}.
  Now the \emph{signature~$\sig(\hat{V})$ of a solution}~$\hat{V}$ is the set $\{\sig(V_i) \mid C_i \in \C \wedge C_i \cap \hat{V} = V_i \neq \emptyset\}$.
  That is, the signature of~$\hat{V}$ encodes the signatures of the clusters touched by~$\hat{V}$ along with, for each touched cluster, the signatures of all vertices touched by~$\hat{V}$ in that cluster. Say that a vertex subset~$V_j$ of some cluster~$C_j$ is \emph{eligible} for some signature~$\sig(V_i)$ of a vertex subset~$V_i$ of a cluster~$C_i$ if $\sig(V_j) = \sig(V_i)$. Accordingly, for some solution $\hat{V}$ with signature $\sig(\hat{V})$, say that a vertex subset~$V_i \subseteq C_i$ of some cluster $C_i$ is \emph{eligible} for $\sig(\hat{V})$ if $\sig(V_i) \in \sig(\hat{V})$. 
  Finally, the \emph{reduct} of a solution $\hat{V}$ is a subset~$V' \subseteq \hat{V}$ such that, for each cluster~$C_i \in \C$ with $\hat{V} \cap C_i \neq \emptyset$ and each vertex signature $S \in \{\sig(v) \mid v \in \hat{V} \cap C_i\}$, there is exactly one vertex $u \in V' \cap C_i$ with signature~$S$. Observe that, if $V'$ is the reduct of $\hat{V}$ and $V'' \supseteq V'$ is any superset of~$V'$ with $|V''| = |\hat{V}|$, then $\sig(\hat{V}) = \sig(V') \subseteq \sig(V'')$ and the closeness centrality of~$z$ achieved by $V''$ is at least the one achieved by $\hat{V}$.

  Let $S$ be the signature of some vertex subset of some cluster. Call a vertex subset~$V_i \subseteq C_i$ of some cluster~$C_i \in \C$ \emph{most potent} for~$S$ if it is eligible for~$S$ and among all vertex subsets of some cluster in~$\C$ that are eligible for~$S$ we have that $V_i$'s cluster is the largest. If the signature~$S$ is clear from the context, we say that~$V_i$ is most potent.

  \looseness=-1 Let $V^*$ be the reduct of an optimal solution. We claim that there is an optimal solution with reduct~$V^*_2$ with signature $\sig(V^*_2) = \sig(V^*)$ such that, for each $S \in \sig(V^*_2)$, there is a vertex subset~$V_j$ of some cluster contained in~$V^*$ that is most potent among vertex subsets eligible for~$S$. Assume the claim does not hold. Then there exists the reduct~$V^*_3$ of some optimal solution such that $V^*_3$ contains the largest number of most potent vertex sets and at least one signature~$S \in \sig(V^*_3)$ such that no vertex subset of some cluster which is most potent for~$S$ is contained in~$V^*_3$. Observe however, that some vertex subset $V_{i} \subseteq C_i$ with $\sig(V_i) = S$ is contained in~$V^*_3$. Let $V_j$ be most potent among vertex sets with signature~$\sig(V_{i})$ and let $V^*_4 = (V^*_3 \setminus V_i) \cup V_j$. Note that, $\sig(V^*_4) = \sig(V^*_3)$ and, because of that, each vertex in $V \setminus (C_i \cup C_j)$ has the same distance to~$z$ according to $V^*_3$ and to $V^*_4$. However, since~$|C_i| < |C_j|$, $\sig(V_i) = \sig(V_j)$, and since, for each vertex signature in $\sig(V_i)$ there is at most one vertex in each of~$V^*_4$ and $V^*_3$ with that signature, more vertices have distance~2 to~$z$ according to~$V^*_4$ than to~$V^*_3$. This is a contradiction to $V^*_3$ being the reduct of an optimal solution. Hence, the claim holds. Thus, once we know the signature of an optimal solution, we know it is optimal to take the most potent (according to that signature) vertex sets into our solution.

  Let $V^*$ again be the reduct of an optimal solution. The \emph{remainder} of $V^*$ is the subset of $V^*$ resulting from removing for each $S \in \sig(V^*)$ a most potent vertex set~$V_j$ with signature~$\sig(V_j) = S$ from~$V^*$ (note that the $V_j$'s are present without loss of generality by the previous claim).

  We claim that there is some optimal solution with reduct~$V_2^*$ with signature~$\sig(V^*_2) = \sig(V^*)$ such that the remainder of $V_2^*$ contains among all vertex subsets of some cluster with a signature in $\sig(V^*_2)$ those vertex subsets in the largest clusters. Assume otherwise. Then there exists the reduct~$V^*_3$ of some optimal solution such that the remainder~$V^{*, R}_3$ of $V^*_3$ contains some $V_{i} \subseteq C_{i}$ such that $V^{*, R}_3 \cap C_i = V_i$, and there is a cluster~$C_j$ and a vertex subset~$V_j \subseteq C_j$ such that~$V^{*, R}_3 \cap V_j = \emptyset$, $\sig(V_j), \sig(V_i) \in \sig(V^*_3)$, and $|C_j| > |C_i|$. Let $V^*_4 = (V^*_3 \setminus V_i) \cup V_j$. Note that $\sig(V^*_3) = \sig(V^*_4)$ and, because of that, each vertex in $V \setminus (C_i \cup C_j)$ has the same distance to~$z$ according to $V^*_3$ and to~$V^*_4$. By the same argument as in the previous claim, we obtain a contradiction to $V^*_3$ being the reduct of an optimal solution. Hence, also this claim holds. Thus, once we know the signature~$S$ of an optimal solution, we know it is optimal to take into the remainder of the optimal solution those vertex subsets with a signature in~$S$ that are contained in the largest clusters.

  The algorithm to compute an optimal solution~$V^*$ is now as follows. Try all possibilities for the intersection~$V^* \cap V_\VDS$. Next, try all possibilities for the signature~$S$ of~$V^*$. Put into~$V^*$, for each $S' \in S$, a vertex subset of some cluster which is most potent for~$S'$. Then, find the smallest vertex subsets of the clusters which have some signature in~$S$ and add them to $V^*$ in decreasing order of the size of their cluster as long as $|V^*| \leq k$. Finally, add to $V^*$ arbitrary vertices until~$|V^*| = k$. This algorithm finds an optimal solution because at least one of the possibilities checked above corresponds to an optimal solution and by the claims above.

  It remains to show the running time: There are at most $2^\ell$
  possibilities for~$V^* \cap V_\VDS$. For each signature of a cluster~$C_i$, of which there are at most~$2^\ell$, there are at most $2^{2^\ell}$ possibilities for the set of vertex signatures of a subset of~$C_i$. Hence, the signature of $V^*$ is the subset of a set of size $2^\ell \cdot 2^{2^\ell}$, meaning that there are at most $2^{2^{2^{O(\ell)}}}$ possibilities for the signature of~$V^*$. Hence, the algorithm checks at most $2^{2^{2^{O(\ell)}}}$ possibilities. To see that the cluster vertex subsets added to~$V^*$ for each possibility can be computed in polynomial time, observe that the it suffices to iterate over each cluster, find its signature and the signature of its vertices and accumulate the largest ones into a dictionary data structure indexed by the size of the clusters. 
\end{proof}

\paragraph{Directed Closeness Improvement.}
\label{section:closenessdirected}

\begin{figure}[t!]
  \centering
  \hfill
	\begin{tikzpicture}[scale=1,style=transform shape]
	\tikzstyle{knoten}=[circle,draw,fill,minimum size=5pt,inner sep=2pt,style=transform shape]
    %\tikzstyle{family}=[ellipse [x radius=1, y radius=2],draw,minimum size=10pt,inner sep=10pt,style=transform shape]
	\node[knoten,label={[label distance=-0.05cm]90:$s_1$}] (K-1) at (1,-1) {};
	\node[knoten,label={[label distance=-0.05cm]90:$s_2$}] (K-2) at (0,-1) {};
	\node[knoten,label={[label distance=-0.05cm]270:$s_3$}] (K-3) at (0,-2) {};
	\node[knoten,label={[label distance=-0.05cm]270:$s_4$}] (K-4) at (0,-3) {};
	\node[knoten,label={[label distance=-0.05cm]270:$s_5$}] (K-5) at (1,-2) {};
    
	%F1
	\draw[color=red,rounded corners=15pt]
    (-0.5,-1.5) rectangle ++(2,1);;
    
    %F2
    \draw[color=blue, fill=blue, opacity=0.1, rounded corners=15pt]
    (-0.5,-3.5) rectangle ++(1,3);;
    
    %F4
    \draw[color=green, fill=green, opacity=0.1, rounded corners=15pt]
    (0.5,-2.5) rectangle ++(1,2);;
    
    %F3
    \draw[rounded corners=15pt]
    (-0.5,-2.5) rectangle ++(2,1);;
    
    \node[label={[color=red,label distance=0.3cm]180:$F_1$}, outer sep=2pt] at (0,-1) {};
    
    \node[label={[label distance=0.3cm]180:$F_3$}, outer sep=2pt] at (0,-2) {};
    
    \node[label={[color=blue,label distance=-0.3cm]270:$F_2$}, outer sep=2pt] at (0,0) {};
    
    \node[label={[color=green,label distance=-0.3cm]270:$F_4$}, outer sep=2pt] at (1,0) {};
    
	% Connect nodes
	%\foreach \i / \j in {1/2,1/3,2/4,2/5,2/6, 3/4}{
	%	\path (K-\i) edge[-] (K-\j);
	%}
	\end{tikzpicture}
  \hfill
	\begin{tikzpicture}[scale=1,style=transform shape]
	\tikzstyle{knoten}=[circle,draw,fill,minimum size=5pt,inner sep=2pt,style=transform shape]
	\node[knoten,label={[label distance=-0.05cm]180:$u_1$}] (U-1) at (0,0) {};
    \node[knoten,label={[label distance=-0.05cm]180:$u_2$}] (U-2) at (0,-1) {};
    \node[knoten,label={[label distance=-0.05cm]180:$u_3$}] (U-3) at (0,-2) {};
    \node[knoten,label={[label distance=-0.05cm]180:$u_4$}] (U-4) at (0,-3) {};
    \node[knoten,label={[label distance=-0.05cm]180:$u_5$}] (U-5) at (0,-4) {};
    
    \node[knoten,label={[label distance=-0.05cm]90:$v_1$}] (V-1) at (1,-0.5) {};
    \node[knoten,label={[label distance=-0.05cm]90:$v_2$}] (V-2) at (1,-1.5) {};
    \node[knoten,label={[label distance=-0.05cm]270:$v_3$}] (V-3) at (1,-2.5) {};
    \node[knoten,label={[label distance=-0.05cm]270:$v_4$}] (V-4) at (1,-3.5) {};
    
    \node[knoten,label={[label distance=-0.05cm]270:$z$}] (Z) at (3.5,-2) {};
	
	%\node[knoten,color=white,label={[label distance=0cm]270:$$}] (K-55) at (2,-1.8) {};
	% Connect nodes

    \path (Z) edge[-latex, dashed, color=red] (V-2);
    \path (Z) edge[-latex, dashed, color=red] (V-4);
    
	\foreach \i / \j in {1/1,2/1,3/3,5/3,2/2,3/2,4/2,1/4,5/4}{
		\path (U-\i) edge[latex-] (V-\j);
	}
	\end{tikzpicture}
  \hfill\mbox{}
    \caption{Parameterized reduction from \textsc{Set Cover} to \textsc{Directed Closeness Improvement}. Left: A \textsc{Set Cover} instance~$I = (U,\mathcal{F},k=2$) with solution~$\{F_2,F_4\}$. Right: The constructed \textsc{Directed Closeness Improvement} instance~$I' = (G,z,k=2,r=4\frac{5}{6})$. The red dashed edges imply a solution for~$I'$.}
    \label{figure:directedcloseness}
\end{figure}
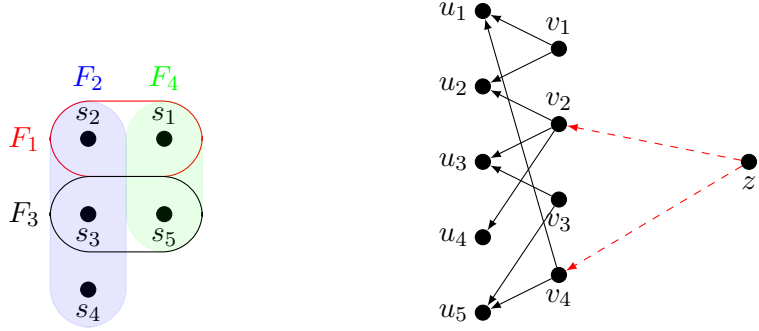

\looseness=-1 We now investigate the problem \textsc{Directed Closeness
Improvement} of improving the closeness centrality of a vertex~$z$ on directed,
unweighted graphs. Herein, the closeness centrality is measured by sum of the
multiplicative inverse distances \emph{from}~$z$ \emph{to} the other
vertices\footnote{It is easy to check that all our results also hold
if the closeness centrality is measured by sum of the
multiplicative inverse distances \emph{from the other
vertices to}~$z$.}. We show that the problem remains W[2]-hard with respect to the number~$k$ of added arcs, even on directed acyclic graphs and even if the diameter of the graph is 4. 
Analogously to the undirected variant, we show that we can maximize the closeness centrality of a vertex $z$ in a directed graph by adding arcs adjacent to~$z$: 
\begin{lemma} \label{lemma:directedclosenessendpoints}
    Let~$I = (G = (V,E), z,k, r)$ be a \textsc{Directed Closeness Improvement} instance. If~$I$ is a \YI{}, then there is a solution $S$ for $I$ where for each arc $a \in S$, the source vertex is~$z$.
\end{lemma}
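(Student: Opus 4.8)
The plan is to mirror the exchange argument used for the undirected case in \cref{lemma:closenessendpointz}, adapted to the fact that now only distances \emph{from}~$z$ matter. Among all solutions of size at most~$k$, I would fix one, call it~$S$, that maximizes the number of arcs whose source is~$z$, and assume for contradiction that $S$ still contains an arc $a = (u,v)$ with $u \neq z$. The goal is then to replace~$a$ by an arc with source~$z$ without decreasing~$c_z$, contradicting the maximality of~$S$.

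The natural replacement is the arc $(z,v)$. First I would treat the main case, where $(z,v) \notin E \cup S$, and set $S' = (S \setminus \{a\}) \cup \{(z,v)\}$. The heart of the argument is to show that $d_{G+S'}(z,w) \le d_{G+S}(z,w)$ for every vertex~$w$. To see this, take a shortest $z$-$w$ path~$P$ in $G+S$: if $P$ avoids~$a$, then all of its arcs survive in $G+S'$; if $P$ uses~$a$, then since $u \neq z$ the path reaches~$v$ only at distance at least~$2$, so I can replace the prefix of~$P$ up to~$v$ by the single arc $(z,v)$ and keep the suffix (which is $a$-free, as $P$ is simple and visits~$u$ before~$v$), obtaining a $z$-$w$ walk in $G+S'$ of length at most~$|P|$. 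Hence every distance from~$z$ can only shrink, so $c_z$ does not decrease, $S'$ is again a solution of size at most~$k$, and it has one more $z$-sourced arc than~$S$: the desired contradiction.

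The main obstacle is the corner case $(z,v) \in E \cup S$, where the replacement arc is already present. Here $d_{G+S}(z,v)=1$, so any shortest path from~$z$ that traverses~$a$ can be rerouted through the existing arc $(z,v)$ without getting longer; thus~$a$ lies on no shortest path out of~$z$ and can be deleted without changing any distance from~$z$. I would then split into two subcases: if some out-arc $(z,w')$ is still missing from $E \cup S$, then adding it to $S \setminus \{a\}$ yields a solution with more $z$-sourced arcs, again contradicting maximality; otherwise $z$ already has an out-arc to every other vertex, so $c_z = n-1$ is maximal, and the set $A = \{(z,w') : w' \neq z,\ (z,w') \notin E\}$ is itself a solution of size at most~$k$ consisting entirely of arcs sourced at~$z$, which proves the lemma directly. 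Closing this corner case cleanly is the only genuinely new ingredient compared with the undirected proof; the rerouting step is the technical core but goes through essentially as above.
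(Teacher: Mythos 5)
Your proof is correct and takes essentially the same route as the paper's: an exchange argument replacing $(u,v)$ by $(z,v)$, with the observation that if $(z,v)$ already exists then $a$ lies on no shortest path out of~$z$ and can be swapped for an arbitrary $z$-sourced arc (your extremal choice of~$S$ and the explicit rerouting just make the paper's terser argument precise). One small patch: when $v = z$ your ``natural replacement'' $(z,v)$ degenerates to a self-loop, but an arc entering~$z$ lies on no shortest path from~$z$ anyway (a shortest path cannot revisit its start), so it falls directly into your delete-and-replace corner-case branch --- exactly the one-sentence observation with which the paper's proof of \cref{lemma:directedclosenessendpoints} handles arcs with target~$z$.
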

\begin{proof}
    The proof is analogous to the one of \cref{lemma:closenessendpointz}: If an optimal solution~$S$ contains an arc $a := (u,v), u,v \neq z$, then any shortest path from $z$ to some vertex $w$ containing the arc $a$ becomes even shorter if $(u,v)$ is replaced by $(z,v)$. If $(z,v)$ already exists, then no shortest path from $z$ contains $a$; hence, it can be replaced by an arbitrary arc with source $z$. Furthermore, an arc $a'$ where $z$ is the endpoint does not improve the closeness centrality of $z$ at all, as any path from $z$ containing $a'$ contains a loop and thus is no shortest path. 
\end{proof}

\cref{lemma:directedclosenessendpoints} directly implies that \textsc{Directed Closeness Improvement} is in XP with respect to the number of arc additions:

\begin{corollary} \label{lemma:directedclosenessxp}
    \textsc{Directed Closeness Improvement} can be solved in $O(n^k \cdot (n+m))$ time, where $k$ is the number of arc additions and thus is in XP with respect to the parameter number of arc additions.
\end{corollary}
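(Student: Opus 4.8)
The plan is to reduce the search space using \cref{lemma:directedclosenessendpoints} and then brute-force over the remaining candidates, exactly mirroring the argument behind \cref{corollary:CIXP}. First I would invoke \cref{lemma:directedclosenessendpoints}: if the instance $I = (G, z, k, r)$ is a \YI{}, then it admits a solution in which every added arc has source vertex~$z$. Hence it suffices to search only among arcs of the form $(z, v)$ with $v \in V \setminus \{z\}$ that are not already present, of which there are at most $n - 1$.

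Next I would enumerate every subset $S$ of these candidate arcs of size at most~$k$. There are $\sum_{i=0}^{k} \binom{n-1}{i} = O(n^k)$ such subsets. For each candidate set $S$ we form the graph $G + S$ and compute the directed closeness centrality of~$z$; since the graph is unweighted, this amounts to a single breadth-first search from~$z$ to obtain the distances $d(z, u)$ for all $u \in V$, which runs in $O(n + m)$ time, after which $c_z$ is evaluated by summing the multiplicative inverses of the finite distances. The algorithm accepts if and only if some enumerated~$S$ yields $c_z \geq r$.

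Correctness follows directly from the lemma: if $I$ is a \YI{}, then an optimal solution using only arcs with source~$z$ exists and is among those enumerated, so some enumerated subset achieves $c_z \geq r$; conversely, any accepted subset is a genuine solution of size at most~$k$, so the algorithm never accepts a \NI{}. Multiplying the $O(n^k)$ candidates by the $O(n + m)$ cost per candidate gives the claimed $O(n^k \cdot (n + m))$ bound, placing the problem in XP with respect to~$k$. There is no substantive obstacle here beyond invoking \cref{lemma:directedclosenessendpoints}; the only point worth stating carefully is that arcs with source~$z$ are the only ones whose addition can shorten shortest-path distances \emph{from}~$z$, which is precisely what the lemma guarantees and what justifies restricting the enumeration to the $n-1$ candidate arcs rather than all $\Theta(n^2)$ possible arcs.
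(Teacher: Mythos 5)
Your proposal is correct and matches the paper's intended argument: the paper states the corollary as a direct consequence of \cref{lemma:directedclosenessendpoints}, with exactly the reasoning you spell out---restricting attention to the at most $n-1$ arcs with source~$z$, enumerating the $O(n^k)$ candidate subsets, and evaluating $c_z$ via one breadth-first search from~$z$ in $O(n+m)$ time per subset, mirroring \cref{corollary:CIXP}. No gaps; you have simply made explicit what the paper leaves implicit.
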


\begin{theorem} \label{theorem:closenesswharddirected}
    \textsc{Directed Closeness Improvement} is NP-hard and W[2]-hard with respect to the number~$k$ of edge additions on directed acyclic graphs.
\end{theorem}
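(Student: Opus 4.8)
The plan is to give a parameterized reduction from \SC, which is W[2]-hard with respect to the solution size and NP-hard. Given an instance $(U, \mathcal{F}, k)$ with $U = \{u_1, \ldots, u_n\}$ and $\mathcal{F} = \{F_1, \ldots, F_m\}$, I would build a directed graph with one \emph{element vertex}~$u_i$ per element, one \emph{set vertex}~$v_j$ per set, and the target vertex~$z$, adding the arc~$(v_j, u_i)$ whenever $u_i \in F_j$. Since every arc points from a set vertex to an element vertex (and the solution arcs, by the next step, will point from~$z$ to set vertices), the constructed graph is acyclic, matching the claim. I would set the arc budget to~$k' = k$ and the threshold to $r = k + \tfrac{n}{2}$ (possibly shifted by a fixed constant if a common sink reachable from all element vertices is added to control the global structure). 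Both the size bound $k' = k$ and the polynomial construction time make this a parameterized reduction, so that W[2]-hardness and NP-hardness transfer once correctness is shown. I would also preprocess so that every element lies in some set, as otherwise the \SC{} instance is a trivial \NI.

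The first structural step is to invoke \cref{lemma:directedclosenessendpoints} to assume that every added arc has source~$z$; hence a solution is simply a choice of at most~$k$ target vertices among the~$v_j$ and~$u_i$. An arc $(z, v_j)$ places~$v_j$ at distance~$1$ and every element of~$F_j$ at distance~$2$, whereas an arc $(z, u_i)$ only places the single vertex~$u_i$ at distance~$1$. The forward direction is then immediate: from a set cover $\{F_{j_1}, \ldots, F_{j_k}\}$ I add the arcs $(z, v_{j_1}), \ldots, (z, v_{j_k})$, putting all~$k$ chosen set vertices at distance~$1$ and, because the cover reaches every element, all~$n$ element vertices at distance~$2$, so that $c_z = k + \tfrac{n}{2} = r$ and~$I'$ is a \YI.

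For the backward direction I would argue the contrapositive via an exchange argument showing that connecting~$z$ to set vertices dominates connecting it to element vertices: replacing an arc $(z, u_i)$ by an arc $(z, v_j)$ for some~$F_j \ni u_i$ moves~$u_i$ from distance~$1$ to distance~$2$ (a loss of~$\tfrac12$) but gains~$v_j$ at distance~$1$ and possibly further elements of~$F_j$ at distance~$2$, so $c_z$ does not decrease. Thus some optimal solution uses only set-vertex targets, for which $c_z = |A| + \tfrac12\,\bigl|\bigcup_{v_j \in A} F_j\bigr|$, where~$A$ is the chosen set of set vertices with $|A| \le k$; this quantity reaches $k + \tfrac n2$ only when $|A| = k$ and the chosen sets cover~$U$. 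Hence, if no set cover of size~$k$ exists, every solution attains $c_z \le k + \tfrac{n-1}{2} = r - \tfrac12 < r$, so~$I'$ is a \NI.

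The main obstacle I anticipate is making the backward direction fully rigorous, i.e.\ ruling out that a mixture of direct element arcs and set arcs beats the clean ``take a cover'' strategy; the exchange argument sketched above is the crux, and some care is needed with duplicate targets (when the set vertex~$v_j$ one would exchange to is already chosen, so the freed arc must be redirected to an unused vertex) and with the boundary case of elements appearing in no set. I expect these to be dispatched by a short counting argument establishing that the maximum achievable~$c_z$ equals $k + \tfrac n2$ exactly when a set cover of size at most~$k$ exists, which yields the required separation with a constant gap of~$\tfrac12$.
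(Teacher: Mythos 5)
Your proposal is correct and follows essentially the same route as the paper's proof: the identical reduction from \SC{} (set vertices pointing to element vertices, an isolated target~$z$, budget~$k$, threshold $r = k + \tfrac{n}{2}$), the same appeal to \cref{lemma:directedclosenessendpoints} to restrict to arcs with source~$z$, and the same exchange argument showing that arcs $(z,v_j)$ to set vertices dominate arcs $(z,u_i)$ to element vertices, followed by the same counting in both directions. If anything, you are slightly more careful than the paper about the duplicate-target case in the exchange step; just note (as both you and the paper implicitly assume) that one may take $k \le m$ without loss of generality so that a cover can be padded to size exactly~$k$.
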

\begin{proof}
    The proof uses a parameterized reduction from \textsc{Set Cover} with the parameter number of subsets~$k$, which is be W[2]-hard~\cite{downey2012parameterized}. Let~$I = (\mathcal{F} = \{F_1,\ldots,F_m\}, U = \{s_1,\ldots,s_n\}, k)$ be a \textsc{Set Cover} instance. We reduce~$I$ to a \textsc{Closeness Improvement} instance~$I' = (G = (V,A),z,k,k + \frac{n}{2})$, where~$G$ is a directed, unweighted graph constructed as follows: For each~$s_i \in U$ and each~$F_j \in \mathcal{F}$, we add a vertex~$v_i$ or~$u_j$ to the graph, respectively. Furthermore, if~$s_i \in F_j$ for any~$s_i \in U, F_j \in \mathcal{F}$, then we add an arc~$(v_j,u_i)$. Finally, we add a vertex~$z$ to the constructed graph. We provide an example in \cref{figure:directedcloseness}. It is easy to see that the constructed directed graph is acyclic.
    
    Before showing the correctness of the reduction, we state and prove the following observation: The closeness centrality of~$z$ can be maximally increased by adding arcs from~$z$ to~$v_j$. 
    
    First of all, the closeness centrality of~$z$ can be maximally increased if the source of each added arc is~$z$. Otherwise, if~$z$ is the target of an arc, then we either introduced a loop, or the source of the arc remains unreachable from~$z$. If~$z$ is neither the source nor the target of the arc, then all introduced shortest paths containing this arc become even shorter if we replace the source of the arc by~$z$. Finally, an arc~$(z,u_i)$ can be replaced by~$(z,v_j)$, where~$(v_j,u_i) \in E$. By adding the arc~$(z,u_i)$, the distance from~$z$ to~$u_i$ is decreased to 1. An arc~$(z,v_j)$ decreases the distance from~$z$ to~$v_j$ to 1 and the distance of at least one more vertex~$u_i$ to 2. Hence, by adding arcs from~$z$ to~$v_j$, we obtain a larger closeness centrality of~$z$ compared to adding edges from~$z$ to~$u_i$. 

    We show that the reduction is correct, that is,~$I$ is a \YI{} if and only if~$I'$ is a \YI{}. 
    
   ~$\Rightarrow$: If~$I$ is a \YI{}, then there is an~$\mathcal{F}' \subseteq \mathcal{F}$ of size~$k$ such that~$\bigcup_{F_j \in \mathcal{F}'} = U$. By adding~$k$ arcs~$(z,v_j)$,~$F_j \in \mathcal{F}'$, there are~$k$ vertices with distance 1 from~$z$, and each vertex in~$\{u_1,\ldots,u_n\}$ has distance~$2$ from~$z$. Hence,~$c_z$ can be increased to~$k + \frac{n}{2}$ and~$I'$ is a \YI{}.

   ~$\Leftarrow$: If~$I$ is not a \YI{}, then there is no such set~$\mathcal{F}' \subseteq \mathcal{F}$ of size~$k$ such that~$\cup_{F_j \in \mathcal{F}'} = U$. 
    
     After adding~$k$ arcs from~$z$ to vertices in~$\{v_1,\ldots,v_m\}$, there is at least one vertex~$u_i$ such that there is no path from~$z$ to~$u_i$. Summing up,~$c_z$ can be increased to at most~$k + \frac{n'}{2}$ for~$n' < n$ and~$I'$ is a \NI{}.        
\end{proof}

In the next theorem, we
slightly modify the reduction in the proof of
\cref{theorem:closenesswharddirected} in order to show that \textsc{Directed
Closeness Improvement} remains W[2]-hard on directed graphs with diameter 4.

\begin{theorem} \label{theorem:closenessdirecteddiameter}
    \textsc{Directed Closeness Improvement} is NP-hard and W[2]-hard with respect to the
    number $k$ of edge additions on directed graphs with diameter~4.
\end{theorem}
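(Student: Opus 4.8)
The plan is to reuse the reduction from \SC{} in \cref{theorem:closenesswharddirected} almost verbatim and only augment it by a small connectivity gadget that forces the diameter down to~$4$ while leaving the set-cover mechanism intact. Recall that construction has an element vertex~$u_i$ for each~$s_i \in U$, a set vertex~$v_j$ for each~$F_j \in \mathcal{F}$, an arc~$(v_j,u_i)$ whenever~$s_i \in F_j$, and an isolated vertex~$z$; a solution adds arcs~$(z,v_j)$ for the chosen sets. The only reason that construction has infinite diameter is that~$z$ reaches nothing and nothing reaches~$z$. So first I would add a \emph{distributor}~$p$ with arcs~$(z,p)$ and~$(p,v_j)$ for all~$j$, which makes~$z$ reach every~$v_j$ in distance~$2$ and every~$u_i$ in distance~$3$ without any solution arc, and a \emph{collector}~$q$ with arcs~$(v_j,q)$ and~$(u_i,q)$ for all~$i,j$ together with the two return arcs~$(q,p)$ and~$(q,z)$. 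I would set the target to~$r = \tfrac32 + \tfrac{k}{2} + \tfrac{m}{2} + \tfrac{n}{2}$ and keep the parameter~$k$; the reduction is clearly polynomial and~$k$ is unchanged, so W[2]- and NP-hardness will transfer once correctness is shown.

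The first substantive step is to verify that the augmented graph has diameter exactly~$4$. This is a short case analysis over the vertex types: forward from~$z$ every vertex lies within distance~$3$ (through~$p$); from~$q$ everything lies within~$3$ (through~$p$ and the return arcs); and the only pairs realising distance~$4$ are~$v_j \to u_i$ with~$s_i \notin F_j$ and~$u_i \to u_{i'}$, both routed as~$\cdots \to q \to p \to v \to u$. Here the two guiding invariants of the gadget are crucial and constitute the main obstacle: (i)~the gadget must not give~$z$ a path of length~$<3$ to any~$u_i$, or the ``covered means distance~$2$'' mechanism collapses; and (ii)~no vertex that~$z$ can reach cheaply may point directly to all~$u_i$, for otherwise a single added arc would pull every~$u_i$ to distance~$2$ at once and the instance would be a trivial \YI{}. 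This is precisely why~$q$ drains back only into~$p$ and~$z$ rather than fanning out to the element vertices, and why the forward reach to the~$u_i$ goes solely through the set vertices.

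Next I would re-establish the ``optimal solutions use only arcs~$(z,v_j)$'' observation of \cref{theorem:closenesswharddirected} in the new, finite-distance setting. By \cref{lemma:directedclosenessendpoints} every solution arc may be assumed to have source~$z$; arcs~$(z,p)$ and~$(z,q)$ shorten no distance and may be discarded. The delicate point is the exchange of an arc~$(z,u_i)$ for~$(z,v_j)$ with~$s_i \in F_j$: with the new base distances its net effect on~$c_z$ is~$-\tfrac12$ (vertex~$u_i$ moving from distance~$1$ to~$2$) $+\tfrac12$ (vertex~$v_j$ moving from~$2$ to~$1$) $+\tfrac16$ per newly shortened element, hence~$\ge 0$; when~$v_j$ is already chosen one instead frees the arc for a fresh set, again with non-negative effect (in a \NI{} there is always a spare set, since~$\bigcup\mathcal{F}=U$ forces~$k<m$). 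Thus one may assume a solution is a subfamily~$A \subseteq \mathcal{F}$ with~$|A| \le k$.

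Finally I would settle correctness. For the forward direction, a cover of size~$\le k$, padded with arbitrary further arcs to reach exactly~$k$ arcs, puts~$p$ at distance~$1$, $q$ at distance~$2$, the~$k$ used set vertices at distance~$1$, the remaining~$m-k$ at~$2$, and all~$n$ element vertices at distance~$\le 2$, giving exactly~$c_z = r$. For the converse, by the observation a best solution is a subfamily~$A$ of size~$k$; if~$I$ is a \NI{} then~$A$ leaves at least one element~$s_i$ uncovered, so~$u_i$ stays at distance~$3$, and a direct computation gives~$c_z = r - \tfrac{n - \operatorname{cov}(A)}{6} \le r - \tfrac16 < r$, where~$\operatorname{cov}(A)$ is the number of covered elements. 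Hence~$I$ is a \YI{} iff~$I'$ is, and the theorem follows. I expect the diameter verification together with invariant~(ii) to be the part requiring the most care, since it is exactly where a naive ``make it strongly connected'' gadget would silently break the reduction.
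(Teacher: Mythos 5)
Your proposal follows essentially the same route as the paper's proof: both take the acyclic \SC{} reduction of \cref{theorem:closenesswharddirected} and graft onto it a connectivity gadget that keeps $d(z,v_j)=2$ and $d(z,u_i)=3$ while making the digraph strongly connected with diameter~4, then re-run the ``only arcs $(z,v_j)$ matter'' normalization and the covered-vs-uncovered distance accounting. The paper realizes the gadget with $m$ private two-hop paths $z \to w_i \to v_i$ plus back-arcs $(u_i,z),(v_i,z),(w_i,z)$ from every vertex and target $r = k + 2n - \tfrac{k}{2}$; your single distributor~$p$ and collector~$q$ enforce exactly the two invariants you identify, and your target $r=\tfrac32+\tfrac{k}{2}+\tfrac{m}{2}+\tfrac{n}{2}$ checks out ($1$ for $p$, $k + \tfrac{m-k}{2}$ for the set vertices, $\tfrac12$ for $q$, $\tfrac{n}{2}$ for the elements). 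If anything your bookkeeping is cleaner than the paper's, which tacitly conflates $m$ and $n$ in its count of the $w_i$ and $v_i$ vertices. So the difference is only in the choice of gadget, not in the structure of the argument.

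One step of your normalization is wrong as written: the claim that $(z,q)$ ``shortens no distance and may be discarded.'' In your base graph $d(z,q)=3$ (via $z\to p\to v_j\to q$), so adding $(z,q)$ gains $\tfrac23$ (or $\tfrac12$ once some set arc is present), and discarding it \emph{decreases} $c_z$ --- which is the wrong direction for the \NI{} side, where every size-$k$ solution must be transformed into a subfamily of $\mathcal{F}$ \emph{without loss}. (For $(z,p)$ there is no issue, since that arc already exists and solutions are disjoint from $E$.) The repair is the exchange you already use elsewhere: swap $(z,q)$ for a fresh arc $(z,v_{j'})$; then $q$ rises from distance $1$ to $2$ ($-\tfrac12$), $v_{j'}$ drops from $2$ to $1$ ($+\tfrac12$), and any newly covered elements contribute $+\tfrac16$ each, so the net effect is nonnegative, with a spare set available because $k<m$ in any nontrivial instance (as you note, $\bigcup\mathcal{F}=U$ makes $k\ge m$ a trivial \YI{}). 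With that one-line fix, and the standing assumption $\bigcup\mathcal{F}=U$ that your diameter analysis also needs, your proof is complete.
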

\begin{proof}
    Let~$I = (\mathcal{F} = \{F_1,\ldots,F_m\}, U = \{s_1,\ldots,s_n\}, k)$ be a \textsc{Set Cover} instance. We construct a \textsc{Directed Closeness Improvement} instance $I' = (G=(V,E),z,k,r)$ as follows. First we construct a directed graph as described in the reduction in the proof of \cref{theorem:closenesswharddirected}. Then we add $m$ vertices $w_i$ and $2m$ arcs $(z,w_i),(w_i,v_i)$ for each $1 \leq i \leq m$. Additionally, for each $u_i, v_i$ and $w_i \in V$, we add the arcs $(u_i,z), (v_i,z)$ and $(w_i,z)$ to $G$. Finally, we set $r=k + 2n - \tfrac{k}{2}$.

    The constructed graph is a directed graph with diameter 3: From $z$, the length of shortest paths to the other vertices is at most 3. The distance from each vertex $w_i, v_i$ and $u_i$ to any vertex $u_j$ is at most 4, and the distance from these vertices to any vertex $v_j$ is at most 3. Hence, $G$ is a strongly connected directed graph with diameter 4.
    
    Analogously to the reduction in \cref{theorem:closenesswharddirected}, there is an optimal solution for $I'$ which only contains arcs where $z$ is the source and some of the vertices $v_i$ are the target - the proof for this statement is the same as the one in the referred theorem.
    
    It remains to show that the reduction is correct, that is $I'$ is a \YI{} if and only if $I$ is a \YI{}:
    
    ~$\Rightarrow$: If~$I$ is a \YI{}, then there is an~$\mathcal{F}' \subseteq \mathcal{F}$ of size~$k$ such that~$\bigcup_{F_j \in \mathcal{F}'} = U$. By adding~$k$ arcs~$(z,v_j)$,~$F_j \in \mathcal{F}'$, there are~$k$ vertices $v_i$ with distance 1 from~$z$, and each vertex in~$\{u_1,\ldots,u_n\}$ has distance~$2$ from~$z$. Moreover, the other $n-k$ vertices $v_i$ have distance 2 from $z$, and each vertex $w_i$ has distance 1 from $z$. Hence,~$c_z$ can be increased to~$r=k + n + \tfrac{n + (n-k)}{2}$ and~$I'$ is a \YI{}.

    ~$\Leftarrow$: If~$I$ is not a \YI{}, then there is no such set~$\mathcal{F}' \subseteq \mathcal{F}$ of size~$k$ such that~$\cup_{F_j \in \mathcal{F}'} = U$. 
    
After adding~$k$ arcs from~$z$ to vertices in~$\{v_1,\ldots,v_m\}$, there is at least one vertex~$u_i$ such that there is no path from~$z$ to~$u_i$. Hence, there are $n$ vertices $w_i$ and $k$ vertices $v_i$ with distance 1 from $z$ and $n-k$ vertices $u_i$ with distance 2 from $z$. Furthermore, there are $n' < n$ vertices $u_i$ with distance 2 from $z$, and there is at least one vertex $u_i$ with distance 3 from $z$. Summing up,~$c_z$ can be increased to at most~$k + n + \frac{n' + (n-k)}{2} + \tfrac{1}{3}$ for~$n' < n$ and~$I'$ is a \NI{}.         
\end{proof}

The computational complexity of \textsc{Directed Closeness Centrality} on
directed graphs with diameter 3 remains open. However, analogously to the
problem variant with undirected input graphs, it is not hard to show that the problem is polynomial-time solvable on graphs with diameter at most 2.

\section{Betweenness Centrality} \label{chapter:betweenness}

\looseness=-1 We now investigate the problem of increasing the betweenness
centrality of a vertex in a graph by inserting a certain number of
edges into the graph. We remark that the betweenness centrality of a vertex in an
undirected graph can be computed in $O(n\cdot m)$ time~\cite{brandes2001faster}.
We show that, similar to \CI{}, \BI{} is W[2]-hard with respect to the
parameter number of edge additions and FPT with respect to the combination of the number of edge additions and the
distance to a cluster graph. 

First, we make an important observation that
will help us in our proofs. Analogous to \cref{lemma:closenessendpointz}, we
show that to improve the betweenness of a vertex by adding edges, it always
makes sense to add only edges adjacent to that vertex.

\begin{lemma} \label{lemma:betweennessendpoints}
    Let~$I = (G,z,k,r)$ be a \textsc{Betweenness Improvement} instance. If~$I$ is a \YI{}, then there is an optimal solution that only contains edges where one endpoint is~$z$.
\end{lemma}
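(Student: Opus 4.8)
The plan is to mirror the structure of the proof of \cref{lemma:closenessendpointz} as closely as possible, taking an optimal solution $S$ that, among all solutions of minimum size achieving $b_z \geq r$, maximizes the number of edges incident to~$z$, and deriving a contradiction from the existence of an edge $\{u,v\} \in S$ with $u, v \neq z$. First I would consider the effect of such an edge on the relevant shortest paths. Unlike closeness, where we only track distances \emph{to}~$z$, betweenness counts, for every ordered pair $\{s,t\}$ with $s,t \neq z$, the fraction $\sigma_{stz}/\sigma_{st}$ of shortest $s$-$t$ paths passing through~$z$. The key observation is that a shortest $s$-$t$ path contributes to $b_z$ only if it uses~$z$; and any such path, broken at~$z$, consists of a shortest $s$-$z$ path concatenated with a shortest $z$-$t$ path. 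Thus adding an edge $\{u,v\}$ not incident to~$z$ can only help $b_z$ insofar as it shortens some shortest $s$-$z$ or $z$-$t$ subpath, i.e.\ it is only useful through its effect on distances to and from~$z$.

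The central step is therefore to argue that replacing $\{u,v\}$ by an edge incident to~$z$ does not decrease $b_z$. Following the closeness proof, I would compare $d(u,z)$ and $d(v,z)$ in $G+S$. If these are equal, then no shortest path through~$z$ uses $\{u,v\}$ (traversing it would force a detour away from~$z$ and back, strictly lengthening the relevant $s$-$z$ or $z$-$t$ segment), so $\{u,v\}$ is not on any $b_z$-contributing path and can be swapped for an arbitrary edge at~$z$ without changing $b_z$, contradicting maximality. If they differ, say $u$ is closer to~$z$, then I would show that adding $\{v,z\}$ instead makes every shortest path from~$z$ that formerly used $\{u,v\}$ (passing $v$ then $u$ then onward to $z$) at least as short via the direct edge, so that the set of vertices at each distance from~$z$ is at least as favourable under $(S \setminus \{u,v\}) \cup \{v,z\}$. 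The delicate point, and the main obstacle, is that betweenness is sensitive not merely to distances but to the \emph{counts} $\sigma_{st}$ and $\sigma_{stz}$: I must ensure the swap does not decrease any individual ratio $\sigma_{stz}/\sigma_{st}$ in a way that lowers the total, even though it weakly decreases distances to~$z$.

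To handle this I would decompose $b_z = \sum_{s,t} \sigma_{stz}/\sigma_{st}$ and argue pairwise. For a fixed pair $\{s,t\}$, its contribution is positive only when $d(s,t) = d(s,z) + d(z,t)$, and in that case the ratio is $(\sigma_{sz}\cdot\sigma_{zt})/\sigma_{st}$ restricted to the shortest-path structure. The swap to $\{v,z\}$ can only preserve or shorten $d(s,z)$ and $d(z,t)$ while leaving distances among vertices avoiding~$z$ essentially unchanged for the pairs that matter; I would verify that whenever the swap shortens a path to~$z$, it only makes it \emph{easier}, not harder, for~$z$ to lie on a shortest $s$-$t$ path. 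I expect this to require a short case analysis showing that any $s$-$t$ path through~$z$ in $G+S$ has a corresponding path of the same or shorter length through~$z$ in $G + ((S\setminus\{u,v\})\cup\{v,z\})$, and that no new shorter $s$-$t$ path \emph{avoiding}~$z$ is created by the swap (since the new edge is incident to~$z$, any path it shortens passes through~$z$). Concluding as before, this produces a solution with strictly more edges incident to~$z$ and betweenness at least as large, contradicting the choice of $S$ and establishing the lemma.

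The main obstacle I anticipate is precisely the counting subtlety just described: the closeness argument is purely about distances and is monotone, whereas for betweenness one must confirm that shortening distances to~$z$ cannot inadvertently divert shortest $s$-$t$ paths \emph{around}~$z$ and thereby reduce $\sigma_{stz}/\sigma_{st}$. The saving observation is that every new edge under consideration is incident to~$z$, so it can never create a shortcut that bypasses~$z$; any $s$-$t$ shortest path it affects necessarily runs through~$z$. This asymmetry is what lets the distance-based monotonicity argument of \cref{lemma:closenessendpointz} carry over to the betweenness setting.
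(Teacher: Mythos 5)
Your proposal is correct and follows essentially the same exchange argument as the paper: replace a solution edge $\{u,v\}$ not incident to~$z$ by the edge joining $z$ to the appropriate endpoint (your ``farther'' endpoint $v$ is the paper's successor $u_j$), observing that every shortest path through~$z$ that used $\{u,v\}$ is superseded by a strictly shorter path through the new edge, which, being incident to~$z$, cannot create $z$-avoiding shortcuts. If anything, your handling of the counts $\sigma_{stz}/\sigma_{st}$ (the equal-distance case and the monotonicity of each pair's ratio under the swap) is more explicit than the paper's terse proof, which compresses all of this into the single assertion that the shortest paths previously containing~$e$ now contain~$e'$.
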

\begin{proof}
    Let~$S$ be a solution for~$I$, and let~$e := \{u_i,u_j\} \in S$. Furthermore, assume that~$e$ introduces at least one shortest path containing~$z$ (if it does not, then~$e$ can be replaced by any edge containing~$z$). Without loss of generality, assume~$u_i$ precedes~$u_j$ on each of these paths. Then by replacing~$e$ by~$e' := \{z,u_j\}$ in~$S$, the distance between~$z$ and~$u_j$ decreases to 1 and the shortest paths previously containing~$e$ now contain~$e'$. Hence,~$b_z$ does not decrease.  
\end{proof}

Hence, if we compute a solution for some \BI{} instance, we need to find a
subset of the graph's vertices of size~$k$ such that adding an edge between~$z$
and these vertices maximally increases the betweenness centrality of~$z$. This
directly implies the following corollary:

\begin{corollary} \label{corollary:BIXP}
    \BI{} is solvable in $O(n^k \cdot (n + m))$ time where $k$ is the number of edge additions and thus is in XP with respect to the parameter number of edge additions.
\end{corollary}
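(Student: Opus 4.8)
The plan is to combine \cref{lemma:betweennessendpoints} with brute-force enumeration, mirroring the XP algorithm for closeness. By that lemma, whenever the instance is a \YI{} there is an optimal solution consisting only of edges incident to~$z$. Hence it suffices to search over edge sets drawn from the set $N_z := \{\{z,v\} \mid v \in V \setminus \{z\},\ \{z,v\} \notin E\}$ of non-edges at~$z$, of which there are at most~$n-1$.

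First I would enumerate every subset $S \subseteq N_z$ with $|S| \leq k$. The number of such subsets is $\sum_{i=0}^{k} \binom{n-1}{i} = O(n^k)$, so this enumeration is the source of the $n^k$ factor. For each candidate~$S$ I would construct $G + S$, evaluate~$b_z$ in that graph, and answer \YI{} as soon as some candidate satisfies $b_z \geq r$ (and \NI{} otherwise). Correctness is immediate from \cref{lemma:betweennessendpoints}: if the instance is a \YI{}, then one of the enumerated sets realizes (or exceeds) the value of an optimal $z$-incident solution, so the test succeeds; conversely, any accepted candidate is itself a valid solution of size at most~$k$. Note that this brute-force sweep does not rely on~$b_z$ being monotone under the addition of $z$-incident edges, since we try all subsets up to size~$k$ rather than only maximal ones.

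The step that needs care is the per-candidate evaluation of~$b_z$. As remarked just before the corollary, the betweenness of a single vertex can be computed in $O(n\cdot m)$ time via Brandes' algorithm, which runs a shortest-path-counting BFS from each source and accumulates the pair-dependencies at~$z$. Using this subroutine the total running time is $O(n^k \cdot nm)$, which already witnesses membership in XP with respect to~$k$. I expect this to be the main obstacle to the literally stated $O(n^k \cdot (n+m))$ bound: unlike closeness---where a single BFS rooted at~$z$ suffices and evaluation is linear---computing~$b_z$ appears to require the full pairwise shortest-path structure, since each summand $\sigma_{stz}/\sigma_{st}$ needs $\sigma_{st}$ and $d(s,t)$ and not only quantities anchored at~$z$, and I do not see how to recover these in linear time after adding the~$k$ new edges.

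Consequently I would present the result with the evaluation cost made explicit, stating the bound as $O(n^k \cdot nm)$, which is the safe and directly justifiable claim and still establishes that \BI{} lies in XP parameterized by the number of edge additions. If a genuinely linear-time single-vertex betweenness evaluation is intended in order to obtain the sharper $(n+m)$ factor, I would isolate that evaluation as a separate subroutine and prove its linear running time in its own right; absent such a subroutine, the natural statement to carry through is the $O(n^k \cdot nm)$ bound.
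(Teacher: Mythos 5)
Your proposal is correct and takes essentially the same route as the paper: the paper derives \cref{corollary:BIXP} directly from \cref{lemma:betweennessendpoints} by brute-force enumeration of the $O(n^k)$ candidate edge sets incident to~$z$, with no further argument given. Your caveat about the per-candidate cost is well founded---the paper itself remarks that computing the betweenness of a single vertex takes $O(n\cdot m)$ time via Brandes' algorithm, so the defensible total is $O(n^k\cdot nm)$ rather than the literally stated $O(n^k\cdot(n+m))$, which appears to have been carried over from the closeness case (\cref{corollary:CIXP}) where a single BFS from~$z$ suffices; either bound establishes membership in XP, so your more careful statement is the one to prefer.
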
 

\paragraph{Hardness Results.}

We show that \BI{} is W[2]-hard with respect to the parameter number of edge additions by a parameterized reduction from \textsc{Dominating Set} on graphs with diameter 3.  Furthermore, we show that the problem is NP-hard on graphs with diameter 3 and H-index 4.

\begin{theorem} \label{theorem:betweennessundirected}
    \BI{} is NP-hard and W[2]-hard with respect to the parameter number~$k$ of edge additions on graphs with diameter~3. Moreover, unless the Exponential Time Hypothesis fails, \BI{} does not allow an algorithm with running time~$f(k)\cdot n^{o(k)}$.
\end{theorem}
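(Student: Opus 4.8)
The plan is to construct a parameterized reduction from \DS{} on graphs of diameter~2, which is W[2]-hard with respect to the solution size~$k$~\cite{DBLP:conf/iwpec/LokshtanovMPRS13} and, unless the \ETH{} fails, admits no $f(k)\cdot n^{o(k)}$-time algorithm. By \cref{lemma:betweennessendpoints} we may restrict attention to solutions consisting of edges incident to~$z$, so the task reduces to selecting $k$~vertices to which $z$ should connect. The core idea is to design a gadget in which connecting $z$ to a set $D$ of chosen vertices lands $z$ squarely on the unique shortest paths between pairs of auxiliary vertices \emph{exactly when} $D$ dominates the original graph, so that a large betweenness value is achievable if and only if the \DS{} instance is a \YI{}.

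The construction I would use starts from a \DS{} instance $G=(V,E)$ with $V=\{u_1,\dots,u_n\}$ and diameter~2. For each original vertex~$u_i$ I attach a short pendant structure (in the spirit of the $x_i,y_i$ gadget of \cref{theorem:closenesswhard1diameter}) whose purpose is to create, for each~$u_i$, a dedicated pair of ``test'' vertices whose shortest path is forced to pass through~$z$ precisely when some neighbor of~$u_i$ (or $u_i$ itself) is selected into~$z$'s neighborhood. The diameter-3 requirement dictates keeping these gadgets shallow: the test vertices should sit at distance roughly~2 from~$z$ and distance~3 from each other through~$z$, while any competing route through the original graph must be strictly longer (or nonexistent). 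The threshold~$r$ is then set so that it is reached exactly when every $u_i$ is covered, i.e.\ when $z$'s chosen neighborhood is a dominating set of size~$k$; here one must account precisely for the $\sigma_{stz}/\sigma_{st}$ fractions, ensuring that each covered test pair contributes a full (or predictable) amount and that partial domination falls strictly below~$r$.

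For correctness I would argue both directions. In the forward direction, a dominating set $D$ of size~$k$ yields the edge set $\{\{z,u\}\mid u\in D\}$; I then compute $b_z$ in $G+S$ by showing each test pair's shortest path runs through~$z$, summing to exactly~$r$, and checking that the denominators $\sigma_{st}$ behave as intended (ideally making the relevant $z$-paths \emph{unique} so that $\sigma_{stz}/\sigma_{st}=1$). In the backward direction, using \cref{lemma:betweennessendpoints} to assume all solution edges touch~$z$, I would show that if the chosen $k$ neighbors of~$z$ fail to dominate~$G$, then at least one test pair is not routed through~$z$, causing $b_z$ to drop strictly below~$r$. The hardness consequences (W[2]-hardness and the \ETH{} lower bound $f(k)\cdot n^{o(k)}$) then transfer directly because the reduction is parameter-preserving ($k'=k$) and runs in polynomial time.

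The main obstacle I anticipate is controlling the \emph{fractional} contributions intrinsic to betweenness: unlike closeness, where each vertex contributes a clean $1/d$, here I must guarantee that selecting $z$'s neighbors does not inadvertently create alternative shortest paths that bypass~$z$ (which would shrink $\sigma_{stz}/\sigma_{st}$) or spurious shortest paths through~$z$ between unintended pairs (which would inflate $b_z$ even under incomplete domination). Simultaneously enforcing the diameter-3 bound is in tension with this, since shallow gadgets create many short routes; the delicate part of the argument will be choosing gadget sizes and multiplicities (possibly adding parallel degree-one attachments to fix the parities of $\sigma_{st}$) so that the betweenness value is a strictly monotone, gap-exhibiting function of the number of dominated vertices. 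Establishing this strict gap—so that $b_z=r$ iff domination is complete—is the crux on which the whole reduction rests.
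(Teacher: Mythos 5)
Your proposal is a plan rather than a proof, and the step you yourself flag as ``the crux''---exhibiting a concrete gadget with a strict, quantifiable gap between the dominated and undominated case---is exactly what is missing; everything before it (restricting to edges at~$z$ via \cref{lemma:betweennessendpoints}, parameter preservation, transferring W[2]- and ETH-hardness) is routine. Worse, the specific instantiation you sketch runs into structural obstructions. If a test pair consists of two vertices local to the same pendant gadget at~$u_i$, then any route through~$z$ is much longer than the local route, so $z$ can never lie on their shortest path; and if, as you write, both test vertices sit at distance~$2$ from~$z$, the route through~$z$ has length~$4$, exceeding their distance-$3$ budget. More fundamentally, your hope of making the relevant $z$-paths \emph{unique} (so that $\sigma_{stz}/\sigma_{st}=1$ exactly under domination) is incompatible with the diameter-$3$ requirement: keeping the whole graph at diameter~$3$ forces a hub adjacent to essentially all of~$U$, and that hub creates competing shortest paths of the same length, so the contributions are unavoidably fractional. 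Since a vertex dominated via a neighbor is reached from~$z$ one step later than a solution vertex, the indicator you want is inherently a difference between fractions such as $2/3$ and $1/2$ per test pair, not between $1$ and~$0$.

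The paper's proof resolves precisely this tension, and in a different way than you propose: instead of per-vertex test pairs it uses one \emph{global} amplifier set~$Z_2$ of $\alpha$ vertices adjacent to $z_1$ and~$z_3$, plus a hub~$z_4$ adjacent to all of~$U'$ (which is what yields diameter~$3$, so the reduction can start from plain \DS{} rather than from diameter-$2$ \DS{}, sidestepping your unsupported claim that the $f(k)\cdot n^{o(k)}$ ETH lower bound holds for diameter-$2$ instances). Each pair $(u',z_{2_i})$ then contributes $1$, $2/3$, or $1/2$ to~$b_{z_1}$ according to whether $u'$ is a solution vertex, dominated, or undominated, and the noise terms (up to $\binom{k}{2}$ pairs inside~$U'$ gaining a path through~$z_1$) are swamped by choosing $\alpha>\frac{3k(k-1)}{2}$, so that the deficit $\alpha\ell/6$ caused by even a single undominated vertex exceeds them. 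This amplification-by-multiplicity argument, together with the explicit shortest-path census that pins down the fractions, is the substance of the theorem; your proposal correctly identifies the difficulty but supplies neither the gadget nor the counting that overcomes it.
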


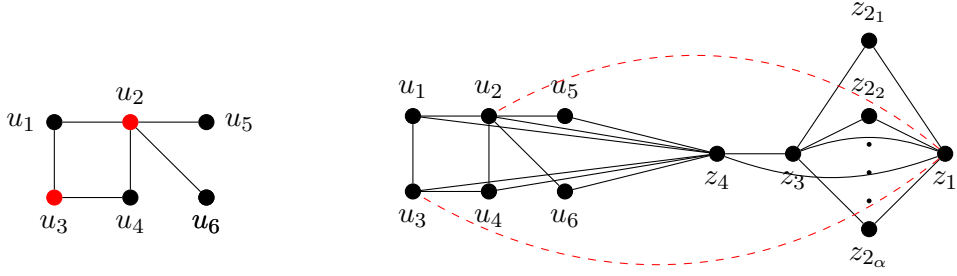
\begin{figure}[t!]
  \centering
  \hfill
  				\begin{tikzpicture}[scale=1]
				\tikzstyle{knoten}=[circle,draw,fill,minimum size=5pt,inner sep=2pt]
				\node[knoten,label={[label distance=0cm]180:$u_1$}] (K-1) at (0,0) {};
				\node[knoten,color=red,label={[label distance=0cm]90:$u_2$}] (K-2) at (1,0) {};
				\node[knoten,color=red,label={[label distance=0cm]270:$u_3$}] (K-3) at (0,-1) {};
				\node[knoten,label={[label distance=0cm]270:$u_4$}] (K-4) at (1,-1) {};
				\node[knoten,label={[label distance=0cm]0:$u_5$}] (K-5) at (2,0) {};
				\node[knoten,label={[label distance=0cm]270:$u_6$}] (K-6) at (2,-1) {};
				\node[knoten,label={[label distance=0cm]270:$u_6$}] (K-6) at (2,-1) {};
				
				\node[knoten,color=white, fill=white] (K-invisible) at (2,-2.25) {};
                
                \node[color=white,knoten] (K-8) at (0,1.9) {};
               % \node[knoten,label={[label distance=0cm]270:$z_{2_\alpha}$}] (K-10) at (0,-1.5) {};
				
				% Connect nodes
				\foreach \i / \j in {1/2,1/3,2/4,2/5,2/6, 3/4}{
					\path (K-\i) edge[-] (K-\j);
				}
                              \end{tikzpicture}
                              \hfill
				\begin{tikzpicture}[scale=1]
				\tikzstyle{knoten}=[circle,draw,fill,minimum size=5pt,inner sep=2pt]
				\tikzstyle{ph}=[circle,draw,fill,minimum size=1pt,inner sep=.5pt]
				\node[knoten,label={[label distance=0cm]90:$u_1$}] (K-1) at (0,0) {};
				\node[knoten,label={[label distance=0cm]90:$u_2$}] (K-2) at (1,0) {};
				\node[knoten,label={[label distance=0cm]270:$u_3$}] (K-3) at (0,-1) {};
				\node[knoten,label={[label distance=0cm]270:$u_4$}] (K-4) at (1,-1) {};
				\node[knoten,label={[label distance=0cm]90:$u_5$}] (K-5) at (2,0) {};
				\node[knoten,label={[label distance=0cm]270:$u_6$}] (K-6) at (2,-1) {};
				\node[knoten,label={[label distance=0cm]270:$z_1$}] (K-7) at (7,-.5) {};
				\node[knoten,label={[label distance=0cm]90:$z_{2_1}$}] (K-8) at (6,1) {};
				\node[knoten,label={[label distance=0cm]90:$z_{2_2}$}] (K-9) at (6,0) {};
				
				\node[ph] (ph1) at (6,-.375) {};
				\node[ph] (ph2) at (6,-.75) {};
				\node[ph] (ph3) at (6,-1.125) {};
				
				\node[knoten,label={[label distance=0cm]270:$z_{2_\alpha}$}] (K-10) at (6,-1.5) {};
				\node[knoten,label={[label distance=0cm]270:$z_3$}] (K-11) at (5,-.5) {};
				\node[knoten,label={[label distance=0cm]270:$z_4$}] (K-12) at (4,-.5) {};
				% Connect nodes
				\foreach \i / \j in {1/2,1/3,2/4,2/5,2/6, 3/4, 11/12}{
					\path (K-\i) edge[-] (K-\j);
				}
			
				\foreach \i in {1,2,3,4,5,6}{
					\path (K-\i) edge[-] (K-12);
				}
			
			    \path (K-7) edge[-, bend right=20](K-11);
				\path (K-7) edge[-, bend left=20](K-12);
					
				\foreach \i in {8,9,10}{
					\path (K-\i) edge[-] (K-7);
					\path (K-\i) edge[-] (K-11);
				}
			
				\path (K-7) edge[color=red,-, bend right=35, dashed](K-2);
				\path (K-7) edge[color=red,-, bend left=35, dashed](K-3);

                              \end{tikzpicture}
                              \hfill\mbox{}
	\caption{Parameterized reduction from \DS{} to \BI{}. Left: A \DS{} instance
	($I = (G,k=2)$). The red colored vertices $u_2, u_3$ form a solution. Right: The
	constructed \textsc{Betweenness Improvement} instance $I' = (G,z_1,k,r)$. The red, dashed edges form a solution.}% for $I'$.}
	\label{figure:betweenunweighted}
\end{figure}

\begin{proof} 
    We give a parameterized reduction from \DS{}, which also directly implies the running time lower bound when assuming ETH~\cite{cygan2015parameterized}. Let~$I = (G = (U,E), k)$ be a \DS{} instance, where~$U = \{u_1,\ldots,u_n\}$ . We construct a \BI{} instance~\[I' = \left(G' = (V,E'), z_1, k, r = \alpha k + \frac{2}{3} \alpha (n-k)  + \frac{1}{2}\left(k + \alpha + \binom{\alpha}{2}\right)\right),\] where~$\alpha > \frac{3k(k-1)}{2}$. The graph~$G'$ is constructed as follows. For each~$u_i \in U$, we add a vertex~$u'_i$ to~$G'$. Also, for each
    edge~$\{u_i,u_j\} \in E$, we add an edge~$\{u'_i,u'_j\}$ to~$E'$. We set~$U'
    := \{u'_1,\ldots,u'_n\}$. Next, we add the vertices~$\{z_1, z_3, z_4\}$
    and~$Z_2 = \{z_{2_1},\ldots,z_{2_\alpha}\}$ to~$G'$. For each~$z_{2_i} \in
    Z_2$, we add two edges~$\{z_1,z_{2_i}\}$ and~$\{z_{2_i},z_3\}$ to~$G'$.
    Furthermore, we add the edges~$\{z_1,z_3\}, \{z_1,z_4\}$ and~$\{z_3,z_4\}$.
    Finally, for each vertex~$u'_i \in U'$, we add an edge~$\{z_4,u'_i\}$.
    \cref{figure:betweenunweighted} illustrates the construction. It is easy to check that $G'$ has diameter 3.

    As~$z_1$ is adjacent to all vertices except the ones in~$U'$, a solution~$S$
    for~$I'$ contains only edges where one endpoint is~$z_1$ and each other one is in~$U'$ (\cref{lemma:betweennessendpoints}). 

We now show that~$I'$ is a \YI{} if and only if~$I$ is a \YI: First, if~$I$ is a \NI{}, we show that there is an upper bound~$r_u < r$ such that~$b_{z_1}$ can be increased to at most~$r_u$ by adding at most~$k$ edges to~$G'$. Second, if~$I$ is a \YI{}, we provide a lower bound~$r_\ell \geq r$ such that~$b_{z_1}$ can be increased to at least~$r_\ell$ by adding at most~$k$ edges to~$G'$. Both~$r_\ell$ and~$r_u$ depend on~$\alpha$, which determines the size of~$G'$. Finally, we determine a minimum value for~$\alpha$ such that~$r_\ell$ and~$r_u$ are strict bounds.

   ~$\Rightarrow$ The input graph contains a dominating set~$U_{DS} \subseteq U$ of size~$k$. We say that~$U'_{DS}$ is the set of vertices in the constructed graph which correspond to the vertices in~$U_{DS}$. Then, by adding~$k$ edges between~$z_1$ and the vertices in~$U'_{DS}$, for the following pairs of vertices there are shortest paths containing~$z_1$:
    \begin{compactitem}
        \item For each pair~$(u' \in U'_{DS}, z \in Z_2)$, there is one shortest path of length~2, containing~$z_1$. The number of such pairs is~$\alpha k$. 
        \item For each pair~$(u' \in U' \setminus U'_{DS}, z \in Z_2)$, two out of three shortest paths of length 3 between~$u'$ and the vertices in~$z$ contain~$z_1$: One contains~$z_1$ and a member of the dominating set, one contains~$z_1$ and~$z_4$, and one contains~$z_3$ and~$z_4$. The number of such pairs is~$\alpha (n-k)$. 
        \item For each pair~$(u' \in U'_{DS}, z_3)$, there are two shortest paths of length~2 between~$u'$ and~$z_3$: One contains~$z_1$, the other one contains~$z_4$. The number of such pairs is~$k$. 
        \item For each pair~$(z_{2_i}, z_{2_j} \in Z_2 \mid i \neq j)$, there are two shortest paths of length~2 between~$z_{2_i}$ and~$z_{2_j}$: One contains~$z_1$, the other one contains~$z_3$. The number of such pairs is~$\binom{\alpha}{2}$. 
        \item For each pair~$(z_{2_i} \in Z_2, z_4)$, there are two shortest paths of length~2: One contains~$z_3$ and the other one contains~$z_1$. The number of such pairs is~$\alpha$
    \end{compactitem}
  
    In total,\[b_{z_1} \geq \alpha k + \frac{2 \alpha (n-k)}{3}  + \frac{k}{2} + \frac{\binom{\alpha}{2}}{2}  + \frac{\alpha}{2},\]
    which can be simplified to \[b_{z_1} \geq \alpha k + \frac{2 \alpha (n-k)}{3}  + \frac{k + \alpha + \binom{\alpha}{2}}{2} =: r_\ell.\] 
    
   ~$\Leftarrow$ We prove the reverse direction by contraposition. That is, we show that~$I'$ is a \NI{} if~$I$ is a \NI{}. If the input instance does not admit a dominating set of size at most~$k$, then there is at least one vertex which cannot be dominated. We analyze the number of shortest paths in the constructed \BI{} instance after adding~$k$ edges between~$z_1$ and vertices in~$U'$. We set~$U'' \subseteq U' := \{u'_i \in U' \mid \{z_1,u'_i\} \in S\}$. Furthermore, let~$\ell$ be the number of vertices that are undominated in~$G'$ after adding the edges in~$S$, i.e.\ which are not adjacent to~$z_1$ and which do not have a neighbor adjacent to~$z_1$. As~$G$ does not admit a dominating set of size~$k$, it holds that~$\ell \geq 1$.
    \begin{compactitem}
    	\item For each pair~$(u' \in U'', z \in Z_2)$, there is one shortest path of length~2, containing~$z_1$. The number of such pairs is~$\alpha k$. 
    	\item For each pair~$(u' \in U' \setminus U'', z \in Z_2)$ where~$u'$ is a neighbor of one of the vertices in~$U''$, two out of three shortest paths of length 3 between~$u'$ and the vertices in~$z$ contain~$z_1$: One contains~$z_1$ and a member of the dominating set, one contains~$z_1$ and~$z_4$, and one contains~$z_3$ and~$z_4$. The number of such pairs is~$\alpha (n-k-\ell)$. 
    	\item For each pair~$(u'_i, u'_j \in U')$, there is a path of length~2 containing~$z_4$. If the vertices in~$U'$ are not adjacent, then this is the shortest path. Additionally, there may be another shortest path containing~$z_1$ of length~2, introduced by the edges in~$S$. Hence, for each of up to~$\binom{k}{2}$ pairs of vertices, one out of two shortest path contain~$z_1$.
    	
    	\item For each pair~$(u' \in U' \setminus U'', z \in Z_2)$ where~$u'$ is \textit{not} a neighbor of one of the vertices in~$U''$, there are two shortest paths betweens~$u'$ and~$z$ of length 3: One contains~$z_1$ and~$z_4$, the other one contains~$z_3$ and~$z_4$. The number of such pairs is~$\alpha \ell$. 
    	\item For each pair~$(u' \in U'', z_3)$, there are two shortest paths of length~2 between~$u'$ and~$z_3$: One contains~$z_1$, the other one contains~$z_4$. The number of such pairs is~$k$. 
    	\item For each pair~$(z_{2_i}, z_{2_j} \in Z_2 \mid i \neq j)$, there are two shortest paths of length~2 between~$z_{2_i}$ and~$z_{2_j}$: One contains~$z_1$, the other one contains~$z_3$. The number of such pairs is~$\binom{\alpha}{2}$. 
    	\item For each pair~$(z_{2_i} \in Z_2, z_4)$, there are two shortest paths of length~2: One contains~$z_3$ and the other one contains~$z_1$. The number of such pairs is~$\alpha$.
    \end{compactitem}
    
    In total, \[b_{z_1} \leq \alpha k + \frac{2 \alpha (n-k-\ell)}{3}  + \frac{\binom{k}{2}}{2} + \frac{\alpha \ell}{2} + \frac{k}{2} + \frac{\binom{\alpha}{2}}{2}  + \frac{\alpha}{2},\] which can be simplified to \[b_{z_1} \leq \alpha k + \frac{2 \alpha (n-k-\ell)}{3}  + \frac{\binom{k}{2} + \alpha (\ell + 1) + k + \binom{\alpha}{2}}{2} =: r_u.\]     
      
    In the last step, we need to determine a proper value for~$\alpha$ such that~$r_u < r_\ell$. Hence, the inequality that needs to be satisfied is   
    { \small
	    \begin{align*}
	    \alpha k + \frac{2 \alpha (n-k-\ell)}{3}  + \frac{\binom{k}{2} + \alpha (\ell + 1) + k + \binom{\alpha}{2}}{2} < \alpha k + \frac{2 \alpha (n-k)}{3}  + \frac{k + \alpha + \binom{\alpha}{2}}{2}
	    \end{align*}
	} %
     for each~$n,k,\ell \in \mathbb{N}, k \leq n, 1 \leq \ell \leq n$. This equation can be transformed to 
     
     \begin{align*}
     \frac{\alpha \ell}{3} > \binom{k}{2}.
     \end{align*}
     By setting~$\ell = 1$ and transforming the binomial coefficient, we get
     
     \begin{align*}
     \alpha > \frac{3k(k-1)}{2}.
     \end{align*}
     
     Hence, by setting~$\alpha$ to a value strictly larger than~$\frac{3k(k-1)}{2}$, the reduction is correct. Furthermore, the reduction is computable in fpt time: As the size of~$I'$ is polynomial to the size of~$I$,~$G'$ can be constructed even in polynomial time.       
\end{proof}

By closer inspection of the reduction, we can also show that \BI{} remains hard
on graphs with diameter 3 and H\nobreakdash-index 4.

\begin{corollary} \label{corollary:betweennesshindex}
     \BI{} is NP-hard on graphs with diameter 3 and H\nobreakdash-index 4. 
\end{corollary}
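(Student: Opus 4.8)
The plan is to reuse, essentially verbatim, the reduction from \cref{theorem:betweennessundirected}, but to feed it a more restricted yet still NP-hard source problem: I would reduce from \DS{} on graphs of maximum degree~$3$, which is NP-hard~\cite{Garey:1990:CIG:574848}. Given such an instance~$G$, I apply the exact same construction, producing $G'$ from the copy $U'$ of $G$ together with the vertices $z_1, z_3, z_4$ and the set $Z_2 = \{z_{2_1}, \ldots, z_{2_\alpha}\}$, with the same choice $\alpha > \tfrac{3k(k-1)}{2}$ and the same target value~$r$. The key point making this legitimate is that nothing in the correctness argument of \cref{theorem:betweennessundirected} depends on the maximum degree of~$G$: the \YI/\NI{} equivalence there only uses the dominating-set property of the selected vertices, \cref{lemma:betweennessendpoints}, and the shortest-path structure routed through the hubs $z_1, z_3, z_4$, while the gap between~$r_\ell$ and~$r_u$ is controlled purely by $n$, $k$, and~$\alpha$. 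Hence the equivalence carries over unchanged, and it remains only to certify the two structural parameters of~$G'$.

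For the diameter, I would simply invoke the observation from \cref{theorem:betweennessundirected} that $G'$ has diameter~$3$ irrespective of~$G$: every $u'_i$ reaches $z_1$ and $z_3$ within distance~$2$ and each $z_{2_j}$ within distance~$3$ via~$z_4$, any two vertices of~$U'$ are joined through~$z_4$ at distance~$2$, and the vertices $z_1, z_3, z_4$ together with $Z_2$ lie within pairwise distance~$2$. The H-index is where the degree restriction does the work, and this degree bookkeeping is really the only new content of the corollary. I would tabulate the degrees in~$G'$: the vertices $z_1$ and $z_3$ have degree $\alpha + 2$, the vertex $z_4$ has degree $n + 2$, each $z_{2_i}$ has degree~$2$, and each $u'_i$ has degree $\deg_G(u_i) + 1 \leq 4$ because of the single additional edge $\{z_4, u'_i\}$. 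Consequently the only vertices of degree at least~$5$ are $z_1, z_3, z_4$, so there are at most three such vertices and in particular never five vertices of degree at least~$5$; this already shows that the H-index is at most~$4$. Since $z_1, z_3, z_4$ together with any single degree-$4$ vertex $u'_i$ (which is present whenever the hard instance $G$ contains a vertex of degree~$3$) form four vertices of degree at least~$4$, the H-index is exactly~$4$.

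The only step requiring genuine care is this last one, namely confirming that no linearly many vertices of degree at least~$5$ sneak into~$G'$. This is precisely the reason the reduction must start from maximum degree~$3$ rather than a weaker bounded-degree assumption: if the source instances allowed degree~$4$, then the corresponding $u'_i$ would have degree~$5$, and with potentially $\Theta(n)$ of them the H-index would jump to at least~$5$. Combining the inherited \YI/\NI{} equivalence with the diameter-$3$ observation and the degree analysis yields NP-hardness of \BI{} on graphs that simultaneously have diameter~$3$ and H-index~$4$, as claimed.
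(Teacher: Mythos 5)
Your proposal is correct and takes essentially the same route as the paper's own proof: both reuse the construction of \cref{theorem:betweennessundirected} unchanged, reduce from \DS{} on maximum-degree-3 graphs (the paper uses the planar degree-3 variant, which is immaterial here), observe that $G'$ has diameter~3, and note that every vertex except $z_1$, $z_3$, $z_4$ has degree at most~4, bounding the H-index by~4. Your additional bookkeeping (exact degree values and showing the H-index is exactly~4) merely sharpens the same argument.
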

\begin{proof}Let~$I = (G,k)$ be a \DS{} instance, where~$G$ is a graph with maximum degree three. Let~$G'$ be the graph constructed by the reduction used in the proof of \cref{theorem:betweennessundirected}. Then each vertex except~$z_1,z_3,$ and $z_4$ has degree at most four. Hence, the h-index of~$G'$ is at most four. As \DS{} is NP-hard even on planar graphs with degree three \citep{Garey:1990:CIG:574848}, \BI{} remains NP-hard on graphs with h-index four. Furthermore, the constructed graph has diameter 3.  
\end{proof}

\paragraph{Algorithmic Result.} \label{section:positivebetweenness}
 
We also derive a positive result for \BI{}. We show that the problem is
fixed-parameter tractable with respect to the combined parameter distance to cluster and number of edge additions.

\begin{theorem} \label{theorem:distancetoclusterbetweenness}
   \BI{} is solvable in time~$2^{O(2^{2^{\ell}}\cdot k \log k)} \cdot n^{O(1)}$,
   where~$\ell$ is the distance of~$G$ to a cluster graph, and thus is in FPT
   with respect to the combined parameter~$(k, \ell)$.
\end{theorem}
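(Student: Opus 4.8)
The plan is to combine the structural reduction from \cref{lemma:betweennessendpoints} with the bounded-neighborhood-type structure used in the proof of \cref{theorem:distancetoclustercloseness}, but now tracking betweenness rather than closeness, and crucially exploiting the bound~$k$ on the solution size together with~$\ell$. By \cref{lemma:betweennessendpoints} we may again assume that an optimal solution~$S$ corresponds to a set~$V'$ of at most~$k$ endpoints different from~$z$. First I would guess the intersection~$V^* \cap V_\VDS$ of an optimal solution with the deletion set, of which there are at most~$2^\ell$ possibilities, and fix it; this determines all distances and shortest-path counts that route through~$V_\VDS$ in a way independent of the remaining cluster choices. It remains to select the at most~$k$ endpoints lying in the clusters~$C_1, \ldots, C_s$.

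The key observation to make tractable the choice of cluster vertices is that, as in the closeness proof, the contribution of a chosen cluster vertex to~$b_z$ depends only on its \emph{signature} (its neighborhood in~$V_\VDS \cup \{z\}$) and on the signature of its cluster, since these determine all relevant shortest-path lengths and shortest-path counts to the other vertices. The difference from the closeness setting is that betweenness counts \emph{fractions} of shortest paths over \emph{pairs} of vertices, so the contribution of a set of chosen vertices is no longer a simple sum over individual vertices: pairs of chosen vertices, and pairs consisting of a chosen vertex and a fixed vertex, interact. However, because there are at most~$k$ chosen cluster vertices, and each falls into one of at most~$2^\ell$ cluster-signature classes crossed with at most~$2^{2^\ell}$ possible vertex-signature classes, the \emph{profile} of a solution — how many chosen vertices of each (cluster-signature, vertex-signature) type it uses, and in which size-ordered clusters they sit — can be described by a vector of length at most~$2^{2^{\ell}}$ with entries bounded by~$k$. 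Thus I would enumerate all such profiles; the number of profiles is~$k^{O(2^{2^\ell})} = 2^{O(2^{2^\ell} \cdot k \log k)}$, matching the claimed running time. For a fixed profile, the concrete vertices should be drawn greedily from the largest eligible clusters, by the same exchange argument as in \cref{theorem:distancetoclustercloseness}: replacing a chosen subset of a smaller cluster by one of a larger cluster of the same signature never decreases~$b_z$, because it can only create more length-2 shortest paths through~$z$ while leaving all cross-pair interactions with vertices outside the two clusters unchanged.

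For each guessed~$V^* \cap V_\VDS$ and each profile, I would then evaluate~$b_z$ of the resulting concrete solution in polynomial time using Brandes' algorithm (cited as~$O(n \cdot m)$), and output the best solution found over all guesses. Correctness follows from the two exchange arguments (signature-invariance of a vertex's contribution, and the greedy ``largest cluster'' choice) establishing that some enumerated candidate equals an optimal solution, together with \cref{lemma:betweennessendpoints} justifying the restriction to edges incident to~$z$.

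The main obstacle I anticipate — and the point at which the betweenness proof genuinely departs from the closeness proof — is the exchange/greedy argument for the pairwise interaction terms. In the closeness proof the objective decomposes additively over chosen vertices once the signature is fixed, so exchanging a smaller cluster for a larger one is clearly monotone. For betweenness one must argue that the pairwise contributions (chosen-vertex/chosen-vertex pairs and chosen-vertex/outside-vertex pairs, as enumerated explicitly in the proof of \cref{theorem:betweennessundirected}) either depend only on the signatures and the multiplicities recorded in the profile, or are handled correctly by fixing the \emph{number} of chosen vertices per type rather than their identities. I would therefore be careful to define the profile to record exactly the data on which these pairwise fractions depend — namely, for each type, the count of chosen vertices and the multiset of host-cluster sizes — so that two solutions with the same profile achieve the same~$b_z$, and so that within a profile the greedy largest-cluster choice is justified. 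Verifying that this finite data suffices to pin down~$b_z$, and that it can be enumerated within the stated bound, is the crux; the remaining bookkeeping (the polynomial-time evaluation and the~$2^\ell$ guesses for~$V_\VDS$) is routine.
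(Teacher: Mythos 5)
There is a genuine gap: your profile does not record \emph{which chosen endpoints share a cluster}, and betweenness is sensitive to exactly this co-location structure. Your final profile definition (``for each type, the count of chosen vertices and the multiset of host-cluster sizes'') cannot distinguish, e.g., picking two endpoints $u,v$ in two different size-3 clusters of the same signature from picking two endpoints $u,p$ inside one size-3 cluster: both give the same type counts and the same multiset of host-cluster sizes, yet $b_z$ differs (in the first case the pair $(u,v)$ has distance $2$ with a shortest path through $z$, and cluster-mates of $u$ reach the other cluster only via $z$; in the second case $d(u,p)=1$ contributes nothing and the other cluster may stay unreached). So two solutions with the same profile can have different $b_z$, your deterministic greedy construction may never generate a candidate realizing the optimal grouping, and your exchange argument does not repair this---it only swaps a same-signature subset between a smaller and a larger cluster, never re-partitions chosen vertices among equal-sized clusters. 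This is precisely the point where the paper's proof does extra work: after guessing $V^*\cap V_{\VDS}$ and the set $\S$ of touched cluster signatures, it additionally enumerates, for each $S\in\S$, the family $T_S$ of \emph{sets of vertex signatures per touched cluster}, the number $n_{S,S'}$ of touched clusters realizing each such set $S'$, and the per-cluster counts $n_{S,S',s}$ (steps 3--5); this data pins down the partition of the solution into clusters up to signature-preserving relabeling, so the exchange to the $k$ largest eligible clusters and an arbitrary choice of concrete vertices then suffices.

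A second, smaller problem: if the ``multiset of host-cluster sizes'' is genuinely part of what you enumerate, the number of profiles is no longer $k^{O(2^{2^\ell})}$, since cluster sizes range up to $n$; you would need to replace sizes by ranks among the $k$ largest eligible clusters (which is what the paper's marking step does, with eligibility defined so that a cluster contains $n_{S,S',s}$ vertices of each required vertex signature---note that a larger cluster with the same \emph{cluster} signature need not contain vertices of the required \emph{vertex} signatures, so eligibility must be checked, as the paper does, before the exchange is applied). Your overall architecture---guess the $V_{\VDS}$-intersection, classify by signatures, bound the enumeration by $k$ and $\ell$, construct greedily from large clusters, evaluate $b_z$ exactly per candidate---matches the paper, and you correctly identified the pairwise-interaction terms as the crux; what is missing is the concrete realization that the needed extra data is the cluster-partition of the solution, which must itself be enumerated.
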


\begin{proof}
   Let~$(G,z,k,r)$ be a \BI{} instance, where the set~$V_{\VDS} \subset V$ is a
   cluster vertex deletion set of size~$\ell$, that is, $G[V \setminus V_\VDS]$ is a cluster graph with connected components (\emph{clusters})~$\{C_1, \ldots, C_s\} =: C$. Since a cluster vertex deletion set of size~$\ell$ can be found in $O(1.92^\ell\cdot (n + m))$~time if it exists~\cite{boral_fast_2016,huffner2010fixed}, we may assume that $V_\VDS$ is given. The basic idea is similar to \cref{theorem:distancetoclustercloseness}. First, we determine the intersection of an optimal solution with $V_\VDS$. To find the vertices in $V \setminus V_\VDS$ we assign signatures to clusters and vertices in clusters based on their neighborhood in $V_\VDS$. We then find the signatures in an optimal solution and the optimal vertices for each signature. A difference to \cref{theorem:distancetoclustercloseness} is that, once we have determined the signatures of vertices in an optimal solution, it still matters how many vertices we take for each signature.

   Let $V^*$ be the set of endpoints different from~$z$ of the edges in an optimal solution. By \cref{lemma:betweennessendpoints} we may assume that $|V^*| = k$. The first step in the algorithm is to iterate over all $2^\ell$ possibilities for putting $V^* \cap V_\VDS$ in the output solution. Assume henceforth that we are in the iteration in which we have found $V^* \cap V_\VDS$.

   \looseness=-1   Define for each cluster~$C_i$ its \emph{cluster signature} as the set of neighbors of $C_i$ in $V_\VDS \cup \{z\}$. From $V^*$ we get a subset $S$ of the set of all $2^\ell$ possible cluster signatures by putting into $S$ all signatures of clusters which have nonempty intersection with~$V^*$. That is, $\S = \{N(C_i) \cap (V_\VDS \cup \{z\}) \mid i \in \{1, \ldots, s\}\}$, where $N(C_i) = \bigcup_{v \in C_i}N(v)$. The second step in the algorithm is to iterate over all $2^{2^\ell}$ possibilities for $\S$. Assume below that we are in the iteration in which we have found $\S$.

   Define for each vertex $v \in V \setminus (V_\VDS \cup \{z\})$ its \emph{vertex signature} as the set $N(v) \cap (V_\VDS \cup \{z\})$. From $V^*$, for each cluster signature $S \in \S$, we obtain a family $T_S$ of sets of vertex signatures by, for each cluster~$C_i$ with signature $S$ that has nonempty intersection with $V^*$, putting into $T_S$ the set $\{N(v) \cap V_\VDS\mid v \in C_i \cap V_\VDS\}$. The third step in the algorithm is to iterate for each $S \in \S$ over all $2^{2^\ell}$ possible families $T_S$. In total, these are at most $2^\ell\cdot 2^{2^{2^{\ell}}}$ possibilities. Assume henceforth that we are in the iteration in which we have found $T_S$ for each $S \in \S$.

   As fourth step in the algorithm we find for each $S \in \S$ and each $S' \in T_S$ the number $n_{S, S'}$ of clusters~$C_i$ such that $C_i \cap V^* \neq \emptyset$, $C_i$ has signature~$S'$, and the set of vertex signatures of vertices in $C_i \cap V^*$ is exactly~$S'$. We do this by iterating over all at most $(2^{\ell}\cdot 2^{2^{\ell}})^k$ possibilities. Assume henceforth that we are in the iteration in which we have found $n_{S, S'}$ for each $S \in \S$ and each $S' \in T_S$.

   \looseness=-1 As a fifth step in the algorithm we find for each of the $n_{S, S'}$ clusters~$C_i$ as above, for each vertex signature in~$s \in S'$ the number $n_{S, S', s}$ of vertices in $C_i \cap V^*$ with signature~$s$. Again, we iterate over all at most $(2^{\ell}\cdot 2^{2^\ell} \cdot 2^\ell \cdot k)^k$ possibilities. Assume henceforth that we are in the iteration in which we have found $n_{S, S', s}$.

\looseness=-1  Say that a cluster~$C_i$ is \emph{eligible} for $S, S'$ if it has cluster signature~$S$ and for each vertex signature~$s \in S'$ there are $n_{S, S', s}$ vertices with signature~$s$ in~$C_i$. We now claim that, without loss of generality, among clusters that are eligible for~$S, S'$, set $V^*$ contains only vertices from the $k$ largest such clusters. Assume otherwise. Hence, there is a cluster~$C_i$ among the $k$ largest clusters eligible for~$S, S'$ and a cluster~$C_j$ which is eligible for~$S, S'$ but not among the $k$ largest such clusters. (Recall that we are in an iteration in which we have found $S, S'$, $n_{S, S'}$, and $n_{S, S', s}$ as defined and hence, $C_j$ exists.) Obtain $W^*$ from $V^*$ by replacing each vertex in $C_j \cap V^*$ with a vertex in $C_i$ with the same signature; call the vertices in $C_i \cap W^*$ the \emph{replacements} of the vertices in~$V^*$. The betweenness centrality of~$z$ with respect to $W^*$ is at least the one % betweenness centrality of~$z$
with respect to~$V^*$. Indeed, each shortest path with respect to~$V^*$ that contains~$z$ and some vertices in~$C_j \cap V^*$ induces a shortest path with respect to~$W^*$ containing~$z$ and the corresponding replacements in~$C_i \cap W^*$. Thus, the claim holds.

   The sixth and final step in the algorithm is thus to try all possibilities to mark $n_{S, S'}$ clusters which are eligible for~$S, S'$ and to put, for each marked cluster and each $s \in S'$ a set of $n_{S, S', s}$ arbitrary vertices of signature $s$ in the marked cluster into the output solution. There are at most $(2^{\ell} \cdot 2^{2^{\ell}} \cdot k^2)^k$ possibilities. By the claim and since we can replace vertices with the same signatures in the marked clusters in $V^*$, in one of the tried possibilities, we will find an optimal solution. 
\end{proof}

\paragraph{Directed Betweenness Improvement.}
\label{section:betweennessdirected}

\looseness=-1 We now cover results for the problem of improving the betweenness
centrality of a vertex in a directed, unweighted graphs. First, we define
betweenness centrality for directed, unweighted graphs, as the definition due to
\citet{freeman1977set} only measures the centrality over all unordered subsets
of vertices of size two. A very natural definition, which is equivalent to the
one used in further literature (e.g.\ by \citet{WHITE1994335}) is to measure the ratio of shortest paths containing a certain vertex $z$ for both orders of any pair of vertices:
$b_z = \sum_{s \in V} \sum
            \frac{\sigma_{stz}}{\sigma_{st}}$. Herein $s, t \neq z$ and the second sum is taken over all $t \in V$ such that $t \neq s$ and $\sigma_{st} \neq 0$.
Using this definition, \textsc{Directed Betweenness Improvement} is defined analogously to \BI.

Analogously to the undirected problem variant, we show that we can maximize the betweenness centrality of a vertex~$z$ by adding arcs incident to~$z$. 

\begin{lemma}\label{lemma:betdirectedpaths}
 If a \DBI{} instance~$I = (G = (V,A),z,k,r)$ is a \YI{}, then there is a solution~$S$ that only contains arcs where either the source or the target is~$z$.
\end{lemma}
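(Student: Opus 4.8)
The plan is to mimic the exchange argument used for \cref{lemma:directedclosenessendpoints} and \cref{lemma:betweennessendpoints}, but now being careful about the directed betweenness definition, where each \emph{ordered} pair $(s,t)$ with $s,t \neq z$ contributes $\sigma_{stz}/\sigma_{st}$. Starting from an arbitrary solution $S$, I would take any arc $a = (u,v) \in S$ with $u \neq z$ and $v \neq z$ and argue that it can be replaced by an arc incident to $z$ without decreasing $b_z$. The guiding intuition, as in the undirected case, is that any shortest $s$-$t$ path through $z$ that uses the arc $(u,v)$ can be rerouted to a path that instead uses an arc incident to $z$, and that this rerouting only shortens such paths and hence does not hurt the count of shortest paths through $z$.

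The key steps I would carry out are as follows. First I would fix such an arc $a = (u,v)$ and consider all shortest $s$-$t$ paths (for ordered pairs $s,t \neq z$) in $G + S$ that pass through $z$ \emph{and} use the arc $a$; if there are none, then $a$ is useless for $b_z$ and can be replaced by any arc incident to $z$ (for instance $(z,v)$ or a fresh arc from $z$), so assume some such path exists. On such a path, the vertex $z$ and the arc $(u,v)$ both appear, so either $z$ precedes $u$ (the path looks like $s \cdots z \cdots u \to v \cdots t$) or $z$ follows $v$ (the path looks like $s \cdots u \to v \cdots z \cdots t$). In the first subcase, I would replace $a$ by the arc $(z,v)$: the segment from $z$ to $v$ that previously went $z \cdots u \to v$ gets replaced by the single arc $z \to v$, which is at least as short, so the resulting walk from $s$ to $t$ through $z$ is a shortest path in the modified graph, and $z$ still lies on it. In the second subcase, I would replace $a$ by the arc $(u,z)$: the segment $u \to v \cdots z$ gets shortcut to $u \to z$, again yielding a path through $z$ that is no longer than before. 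In either subcase the replacement arc is incident to $z$ as required, and the centrality contribution of the affected pairs does not decrease.

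The main obstacle, which I expect to be the delicate part, is verifying that this local replacement does not \emph{decrease} $b_z$ overall, because betweenness is a ratio $\sigma_{stz}/\sigma_{st}$ and rerouting can in principle change both numerator and denominator, possibly for other pairs $(s,t)$ as well. Shortening some shortest paths could make the distance $d(s,t)$ strictly smaller, which changes the set of shortest paths entirely and could affect whether $z$ lies on them. I would handle this by arguing that the new arc is incident to $z$, so every new shortest path it creates between a pair $(s,t)$ that was not previously of that length must itself pass through $z$ (since it uses an arc touching $z$ as a shortcut), which can only help the numerator; and for pairs whose distance is unchanged, the argument above exhibits a shortest path through $z$ corresponding to each old one through $z$ and through $a$. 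I would therefore conclude that $b_z$ in $G + (S \setminus \{a\}) \cup \{a'\}$ is at least $b_z$ in $G + S$, where $a'$ is the chosen replacement arc incident to $z$. Repeating this replacement for every arc of $S$ not incident to $z$ yields a solution all of whose arcs are incident to $z$, proving the lemma. The one point I would be most careful to state cleanly is the tie-breaking and the exact accounting of $\sigma_{st}$ versus $\sigma_{stz}$ under the rerouting, since a sloppy version of this argument (as is arguably the case in the proof of \cref{lemma:betweennessendpoints}) glosses over the ratio behaviour.
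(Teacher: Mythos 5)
Your overall strategy is the same arc-by-arc exchange argument that the paper uses, and you are in fact more careful than the paper's own proof: the paper only ever considers the situation in which a rerouted shortest path reaches $z$ \emph{after} traversing the arc $(u_1,u_2)$ (it simply asserts that ``$u_2$ must have been connected to~$z$'') and always substitutes $(u_1,z)$, whereas you correctly observe that $z$ may instead precede $u$ on the path and propose $(z,v)$ for that orientation. Your bookkeeping for pairs whose distance strictly drops (every new shortest path must use the inserted arc, which touches $z$, so the pair's contribution becomes $1$) is also sound and more explicit than anything in the paper.

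The genuine gap is that your case distinction is made \emph{per shortest path}, while the replacement is made \emph{per arc}. A single arc $a=(u,v)$ can simultaneously lie on through-$z$ shortest paths of the form $s\cdots z\cdots u\to v\cdots t$ for some pairs and $s'\cdots u\to v\cdots z\cdots t'$ for others; this happens as soon as $G+S$ contains the cycle $z\leadsto u\to v\leadsto z$, which the added arc itself can close even when $G$ is acyclic. Then $(u,z)$ destroys the through-$z$ shortest paths of the first kind and $(z,v)$ those of the second kind, and your claim that for pairs with unchanged distance each old shortest path through $z$ has a counterpart fails for the sacrificed orientation: those pairs' contributions can drop from $1$ to $0$ with nothing to compensate. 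This is not merely a presentational hole. Take $V=\{z,s_1,s_2,u,v,t_1,t_2\}$ with arcs $s_1\to z$, $z\to u$, $z\to t_2$, $s_2\to u$, $v\to t_1$, $v\to z$ (an acyclic graph), $k=1$, $r=8$. Adding $(u,v)$ gives $b_z=8$: the ordered pairs $(s_1,u),(s_1,v),(s_1,t_1),(s_1,t_2),(s_2,t_2),(u,t_2),(v,u),(v,t_2)$ each have a unique shortest path, passing through $z$. But an exhaustive check of all eight addable arcs incident to $z$ gives $b_z\le 6$ in every case (both $(z,v)$ and $(u,z)$ yield exactly $6$). So for this \YI{} no solution consisting of arcs incident to $z$ exists, meaning the exchange cannot be repaired and \cref{lemma:betdirectedpaths} as stated is false for general directed graphs; the paper's proof contains the same flaw, hidden in its unjustified assumption that $u_2$ reaches $z$. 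A correct statement needs a hypothesis ruling out the mixed case, e.g.\ that $G+S$ remains acyclic, so that $z\leadsto u$ and $v\leadsto z$ cannot both hold; under such a hypothesis your two-case exchange, together with your distance-drop accounting, does go through.
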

\begin{proof}
    Assume~$S$ contains an arc~$(u_1,u_2)$ such that~$u_1 \neq z$ and~$u_2 \neq z$. Let~$v_1,v_2 \in V$ such that~$(u_1,u_2)$ introduced a shortest path from~$v_1$ to~$v_2$ containing~$z$ and the arc~$(u_1,u_2)$. It is clear that~$u_2$ must have been connected to~$z$ by a path before adding the arc~$(u_1,u_2)$. Furthermore, it is clear~$\sigma_{v_1v_2z} \leq 1$, as there max be other paths from~$v_1$ to~$v_2$ not containing~$z$. 
    
    However, the shortest paths introduced by~$(u_1,u_2)$ necessarily contain~$u_1$ and~$u_2$; hence, these paths can be contracted by replacing~$(u_1,u_2)$ by~$(u_1,z)$. By this, we do not decrease~$b_z$: Let~$\ell$ be the length of a shortest path from~$v_1$ to~$v_2$ which contains~$(u_1,u_2)$ and~$z$. Then, after replacing~$(u_1, u_2)$ by~$(u_1,z)$, there is exactly one shortest path of length~$\ell' < \ell$ from~$v_1$ to~$v_2$. Hence,~$\sigma_{stz} = 1$.  
\end{proof}

However, note that a solution~$S$ for a \YI{}~$I = (G,z,k,r)$ may also contain arcs where~$z$ is the source. For instance,~$(G = \{z,v_1,v_2\}, A = \{(v_1,z)\}, z, 1, 1)$ is a \YI{}  with solution~$S = \{(z,v_2)\}$. 

\begin{corollary} \label{corollary:dbiXP}
    \DBI{} is solvable in $O((2n)^k\cdot (n+m))$ time where $k$ is
    the number of edge additions, and thus is in XP with respect to the parameter number of edge additions.
\end{corollary}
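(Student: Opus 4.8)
The plan is to combine \cref{lemma:betdirectedpaths} with exhaustive enumeration over a restricted family of candidate solutions. By \cref{lemma:betdirectedpaths}, whenever $I$ is a \YI{} there is a solution~$S$ in which every arc is incident to~$z$, i.e.\ has $z$ either as its source or as its target. Hence it suffices to search only among arcs incident to~$z$, rather than among all $\Theta(n^2)$ possible arcs.

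First I would bound the number of candidate arcs. For each vertex $v \in V \setminus \{z\}$ there are at most two arcs incident to~$z$ that may be added, namely $(z,v)$ and $(v,z)$ (discarding any already present in~$A$). This yields fewer than $2n$ candidate arcs in total. Since a solution of the form guaranteed by \cref{lemma:betdirectedpaths} picks at most~$k$ of them, we can enumerate all candidate edge sets~$S$ by filling $k$ slots, each with one of fewer than $2n$ choices; this produces at most $(2n)^k$ candidate sets~$S$ to inspect.

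Second, for each candidate set~$S$ with $S \cap A = \emptyset$ and $|S| \le k$, I would construct $G + S$, compute the betweenness centrality~$b_z$ of~$z$ in $G + S$ (via shortest-path computations from and to~$z$ together with the associated shortest-path counts), and test whether $b_z \ge r$. The instance is a \YI{} exactly when some candidate passes this test: correctness in one direction is immediate, and in the other direction \cref{lemma:betdirectedpaths} guarantees that a solution of the enumerated form exists, so it is among the candidates. Charging $O(n+m)$ to the evaluation of each candidate gives the claimed $O((2n)^k \cdot (n+m))$ running time, placing \DBI{} in XP with respect to~$k$.

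The point that requires care is the base of the exponential. In contrast with the directed closeness setting, where \cref{lemma:directedclosenessxp} lets one restrict to arcs leaving~$z$, here we genuinely must consider both orientations: as the example following \cref{lemma:betdirectedpaths} illustrates, an arc with $z$ as its source can be indispensable to a solution. This is exactly why each vertex $v \ne z$ contributes two candidate arcs and the enumeration base is $2n$ rather than~$n$; overlooking the $z$-as-source arcs would make the search incomplete.
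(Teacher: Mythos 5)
Your proposal is correct and follows exactly the route the paper intends: it uses \cref{lemma:betdirectedpaths} to restrict the search to the fewer than $2n$ arcs incident to~$z$ in either orientation (correctly observing, as the paper's example after that lemma does, that arcs with source~$z$ cannot be discarded, unlike in the directed closeness setting), and then enumerates the at most $(2n)^k$ candidate arc sets, evaluating $b_z$ for each. This matches the paper's derivation of the corollary, including its per-candidate cost accounting, so nothing further is needed.
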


\noindent Substantial improvement of this running time is unlikely, as \cref{theorem:betweennesswharddirected} shows. 

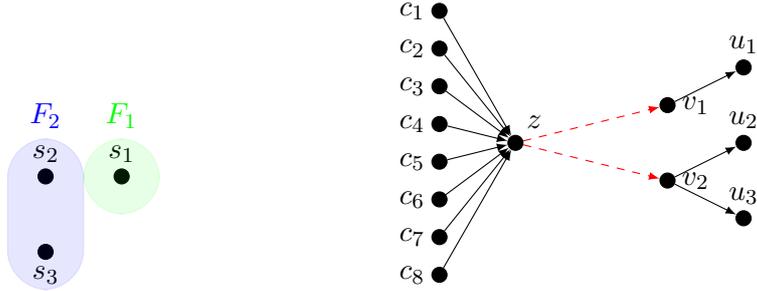
\begin{figure}[t!] 
    \centering
    \hfill
	\begin{tikzpicture}[scale=1,style=transform shape]
	\tikzstyle{knoten}=[circle,draw,fill,minimum size=5pt,inner sep=2pt,style=transform shape]
    %\tikzstyle{family}=[ellipse [x radius=1, y radius=2],draw,minimum size=10pt,inner sep=10pt,style=transform shape]
	\node[knoten,label={[label distance=-0.05cm]90:$s_1$}] (K-1) at (1,-1) {};
	\node[knoten,label={[label distance=-0.05cm]90:$s_2$}] (K-2) at (0,-1) {};
	\node[knoten,label={[label distance=-0.05cm]270:$s_3$}] (K-3) at (0,-2) {};

    %F2
    \draw[color=blue, fill=blue, opacity=0.1, rounded corners=15pt]
    (-0.5,-2.5) rectangle ++(1,2);;
    
    %F4
    \draw[color=green, fill=green, opacity=0.1, rounded corners=15pt]
    (0.5,-1.5) rectangle ++(1,1);

    \node[label={[color=blue,label distance=-0.3cm]270:$F_2$}, outer sep=2pt] at (0,0) {};
    
    \node[label={[color=green,label distance=-0.3cm]270:$F_1$}, outer sep=2pt] at (1,0) {};
    
	% Connect nodes
	%\foreach \i / \j in {1/2,1/3,2/4,2/5,2/6, 3/4}{
	%	\path (K-\i) edge[-] (K-\j);
	%}
	\end{tikzpicture}
    \hfill
\begin{tikzpicture}[scale=1,style=transform shape]
	\tikzstyle{knoten}=[circle,draw,fill,minimum size=5pt,inner sep=2pt,style=transform shape]
    
	\node[knoten,label={[label distance=-0.05cm]90:$u_1$}] (U-1) at (2,1) {};
    \node[knoten,label={[label distance=-0.05cm]90:$u_2$}] (U-2) at (2,0) {};
    \node[knoten,label={[label distance=-0.05cm]90:$u_3$}] (U-3) at (2,-1) {};
    
    \node[knoten,label={[label distance=-0.05cm]0:$v_1$}] (V-1) at (1,0.5) {};
    \node[knoten,label={[label distance=-0.05cm]0:$v_2$}] (V-2) at (1,-0.5) {};

    \node[knoten,label={[label distance=-0.05cm]78.75:$z$}] (Z) at (-1,0) {};
	
	%\node[knoten,color=white,label={[label distance=0cm]270:$$}] (K-55) at (2,-1.8) {};
	% Connect nodes

    %\path (Z2) edge[-latex] (Z);

    \foreach \i in {1,...,8}{
        \node[knoten,label={[label distance=-0.05cm]180:$c_\i$}] (Z\i) at (-2,2.25-\i/2) {};
        \path (Z\i) edge[-latex] (Z);
    }
    
	\foreach \i / \j in {1/1,2/2,3/2}{
		\path (V-\j) edge[-latex] (U-\i);
	}

    \path (Z) edge[-latex,color=red,dashed] (V-1);
	\path (Z) edge[-latex,color=red,dashed] (V-2);
    
    \end{tikzpicture}
    \hfill\mbox{}
    \caption{Parameterized reduction from \textsc{Set Cover} to \DBI{}. Left: A \textsc{Set Cover} instance~$I = (U,\mathcal{F},k=2$) with solution~$\{F_1,F_2\}$. Right: The constructed \DBI{} instance~$I' = (G,z,k=2,r=4\frac{5}{6})$. The red dashed edges imply a solution for~$I'$.}
    \label{figure:dbiarcadditions}
\end{figure}

\begin{theorem} \label{theorem:betweennesswharddirected}
    \DBI{} is NP-hard and W[2]-hard with respect to the parameter number of arc additions $k$ on directed acyclic graphs.
\end{theorem}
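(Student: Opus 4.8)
The plan is a parameterized reduction from \SC{} (W[2]-hard and NP-hard with respect to the number~$k$ of sets~\cite{downey2012parameterized}), reusing the acyclic gadget of \cref{theorem:closenesswharddirected} and enlarging it so that betweenness becomes meaningful (see \cref{figure:dbiarcadditions}). Given $(\mathcal{F} = \{F_1, \ldots, F_m\}, U = \{s_1, \ldots, s_n\}, k)$, I would introduce an element vertex~$u_i$ for each~$s_i$, a set vertex~$v_j$ for each~$F_j$, and an arc~$(v_j, u_i)$ whenever~$s_i \in F_j$, together with the vertex~$z$. The new ingredient is a pool of~$p$ \emph{source} vertices~$c_1, \ldots, c_p$, each with a single arc~$(c_\ell, z)$, where~$p := (m+n)(k+n)$ (any sufficiently large polynomial works). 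Since every arc runs from $\{c_\ell, v_j\}$ towards the sinks $\{z, u_i\}$, the graph is acyclic. The threshold is~$r := p(k+n)$, and the intended solution adds the~$k$ arcs~$(z, v_j)$ for the sets~$F_j$ of a size-$k$ cover.

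For the forward direction I would exhibit this intended solution. With~$z$ initially a sink we have~$b_z = 0$; after adding~$(z, v_j)$ for a cover, every source~$c_\ell$ becomes the start of the unique shortest paths~$c_\ell \to z \to v_j$ (length~$2$) and~$c_\ell \to z \to v_j \to u_i$ (length~$3$, for~$s_i \in F_j$), all of which pass through~$z$. As the only vertices reaching~$z$ are the~$p$ sources and the only vertices reachable from~$z$ are the~$k$ chosen set vertices and the~$n$ covered element vertices, each of the~$p(k+n)$ resulting pairs contributes exactly~$1$ to~$b_z$, giving~$b_z = r$. Hence a \YI{} of \SC{} yields a \YI{} of \DBI{}.

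The reverse direction is where the real work lies, and I would argue it by a uniform upper bound rather than by pinning down an optimal solution. Invoking \cref{lemma:betdirectedpaths}, I may assume every solution arc is incident to~$z$, and I would use the clean estimate~$b_z \le (\#\text{sources}) \cdot (\#\text{targets})$, where the sources are the vertices reaching~$z$ and the targets the vertices reachable from~$z$ (each pair contributes at most~$1$ since~$\sigma_{stz}/\sigma_{st} \le 1$). Only in-arcs can create sources, so trivially~$\#\text{sources} \le p + m + n$; only out-arcs can create targets, each out-arc adding its head and, when the head is a set vertex~$v_j$, the elements of~$F_j$. The combinatorial heart is to show that, when \SC{} is a \NI{}, the number of targets is at most~$k+n-1$: writing the out-arcs as~$a_1$ arcs into set vertices and~$a_2$ arcs into element vertices with~$a_1 + a_2 \le k$, the targets are the~$a_1$ set heads together with the at most~$n$ reached element vertices, hence at most~$a_1 + n$ many; if~$a_2 \ge 1$ this is at most~$(k-1) + n = k+n-1$, and if~$a_2 = 0$ then the~$a_1 \le k$ set heads must leave some element uncovered (else they form a size-$\le k$ cover), so they reach at most~$n-1$ elements, again capping the count at~$k+n-1$. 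Therefore~$b_z \le (p + m + n)(k + n - 1) < p(k+n) = r$, the last inequality holding precisely because~$p > (m+n)(k+n-1)$. This establishes the equivalence of instances; the parameter is preserved~($k' = k$) and the construction is polynomial, so the reduction is a valid parameterized reduction and also yields NP-hardness.

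The main obstacle is exactly this source/target bookkeeping: unlike \textsc{Directed Closeness Improvement}, where arcs into~$z$ are useless, here added arcs can both manufacture new sources (an in-arc~$(u_i, z)$ even turns every set vertex containing~$s_i$ into a source) and reach elements outside the set structure (via direct arcs~$(z, u_i)$), so a purely local exchange argument restricting solutions to arcs~$(z, v_j)$ fails. The fix is quantitative rather than structural: over-provisioning the source pool~$p$ so that the crude source bound multiplied by the tight target bound~$k+n-1$ still falls below~$r$, while the intended solution meets~$r$ exactly. Verifying that the NO-case target count can never reach~$k+n$ — so that a single unit of slack, scaled by~$p$, cleanly separates the two cases — is the step that requires the most care.
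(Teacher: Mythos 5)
Your proposal is correct and uses the same construction skeleton as the paper --- a reduction from \SC{} with element vertices~$u_i$, set vertices~$v_j$, arcs~$(v_j,u_i)$, and a large in-star of padding vertices~$c_\ell$ pointing at~$z$ --- but the correctness argument for the reverse direction is genuinely different, and arguably more robust. The paper (with pool size~$m(m+n-1)$) proceeds structurally: after invoking \cref{lemma:betdirectedpaths} it argues via local exchanges that arcs~$(z,u_i)$ can be replaced by arcs~$(z,v_j)$ and that out-arcs dominate in-arcs, thereby normalizing every solution to arcs of the form~$(z,v_j)$ before counting exactly. You instead skip the exchange arguments (which you rightly flag as the delicate spot --- the paper's version of them is rather informal) and prove a global quantitative bound: every normalized solution satisfies $b_z \le (\text{number of sources})\cdot(\text{number of targets}) \le (p+m+n)(k+n-1)$ in the \textsc{No}-case, and over-provisioning $p = (m+n)(k+n)$ makes this fall strictly below $r = p(k+n)$. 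Your target-counting lemma (at most $k+n-1$ targets when no size-$k$ cover exists) is the right combinatorial core, and your case split on $a_2$ even handles elements contained in no set, which a pure cover-replacement argument would miss. One small bookkeeping omission: your accounting only classifies out-arcs into set heads ($a_1$) and element heads ($a_2$), but a solution may also contain arcs~$(z,c_\ell)$ into the padding pool; each such arc contributes exactly one target ($c_\ell$ has no out-neighbor besides~$z$) while consuming budget, so it behaves exactly like an $a_2$-arc in your case analysis and the bound $k+n-1$ survives unchanged --- but this case should be stated. Note also that your bound genuinely requires \cref{lemma:betdirectedpaths}: for arc sets not incident to~$z$, chains such as $z \to v_j \to u_i \to v_{j'}$ could inflate the target count, so the normalization cannot be dropped. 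A final point in your favor: your thresholds are internally consistent, whereas the threshold $k(1+n)+n$ stated in the paper's instance does not match the value $(k+n)\cdot m(m+n-1)$ computed in its own forward direction, which appears to be a typo there.
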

\begin{proof}
    We show a parameterized reduction from \textsc{Set Cover}. Let~$I = (\mathcal{F} = \{F_1,\ldots,F_m\}, U = \{s_1,\ldots,s_n\})$ be a \textsc{Set Cover} instance. We construct a \DBI{} instance~$I' = (G = (V,A),z,k,k(1+n) + n))$, where~$G$ is a directed, unweighted graph. The construction is as follows: 
    \begin{compactitem}
        \item For each~$s_i \in U$, add a vertex~$u_i$. Set~$V_U := \{u_1,\ldots,u_n\}$.
        \item For each~$F_j \in \mathcal{F}$, add a vertex~$v_j$. Set~$V_V := \{v_1,\ldots,v_m\}$.
        \item Add the vertex~$z$.
        \item Add the vertices~$c_1,\ldots,c_{m(m+n-1)}$; the set of these vertices is denoted as~$V_C$.
        \item For each~$c \in V_C$, add the arcs~$(c,z)$.
        \item For each~$u_i \in V_U$ and each~$v_j \in V_V$, add an arc~$(v_j,u_i)$ if~$s_i \in F_J$.
    \end{compactitem}
    In \cref{figure:dbiarcadditions}, the reduction is illustrated. It is easy to see that the constructed directed graph is acyclic.

    Let~$I$ be a \YI{} and~$S$ be an arc set of size at most~$k$, such that~$b_z \geq r$ in~$G' = (V,A \cup S)$. We now show that for each arc~$a \in S$, the source is~$z$ and the target is one of the vertices in~$V_V$. First, from \cref{lemma:betdirectedpaths} we know that there is a solution~$S'$ of the same size where~$z$ is an endpoint of each arc~$a \in S'$. Hence, in the following we assume that for each~$a \in S$, one of its endpoints is~$z$. 
    
   Moreover, if a solution~$S$ contains an arc~$(z,u_i), u_i \in V_U$, we can replace it by an arc~$(z,v_i), v_i \in V_V$ such that~$s_i \in F_j$ without decreasing~$b_z$: The arc~$(z,u_i)$ introduces paths from the vertices~$z$ and all its predecessors to~$u_i$.
   By replacing~$(z,u_i)$ by~$(z,v_j)$, the paths remain, but additionally paths from~$z$ and its predecessors to~$v_j$ are added. Hence,~$b_z$ does not decrease. 
   
   Furthermore, by adding the vertex set~$V_C$ of size~$m(m+n-1)$, we ensure that by adding arcs where the source is $z$, we obtain more (shortest) paths containing~$z$ than by adding arcs where the endpoint is $z$: Each arc from~$z$ to a vertex in~$V_V$ introduces at least~$m(m+n-1)$ shortest paths containing~$z$. However, adding an arc from a vertex in~$V_U$ to~$z$ introduces at most~$m ((m-1)+(n-1))$ paths containing~$z$: Each vertex in~$V_U$ has at most~$m$ predecessors in~$V_V$. Furthermore,~$z$ may have at most~$m$ successors in~$V_V$ and at most $(n-1)$ successors in $V_U$. Hence, by adding an arc from a vertex in $V_U$ to $z$, $c_z$ is increased by at most ~$m ((m-1)+(n-1))$. 
       
   We now show that the reduction is correct, i.e.\ that~$I$ is a \YI{} if and only if~$I'$ is a \YI{}.

   ~$\Rightarrow$: If~$I$ is a \YI{}, then there is a~$\mathcal{F}' \subseteq \mathcal{F}$ of size~$k$  such that~$\cup_{F_j \in \mathcal{F}'} = U$. By adding arcs~$(z,v_j)$ for each~$F_j \in \mathcal{F}'$, the following shortest paths contain~$z$:     
    \begin{compactitem} 
        \item For each~$v_j$ such that~$F_j \in \mathcal{F}'$ and each~$c \in V_C$, there is a shortest path from~$c$ to~$v_j$ containing~$z$. As~$|\mathcal{F}'| = k$, the number of such shortest paths is~$k(m(m+n-1))$.
        \item For each~$u_i$ and each~$c \in V_C$, there is a shortest path from~$c$ to~$u_i$ containing~$z$. In total, the number of such paths is~$n(m(m+n-1))$.
    \end{compactitem}
    
    Hence,~$b_z$ can be increased to~$(k+n)(m(m+n-1))$ and~$I'$ is a \YI{}.  
    
   ~$\Leftarrow$: If~$I$ is not a \YI{}, then there is no such set~$\mathcal{F}' \subseteq \mathcal{F}$ of size~$k$ such that~$\sum_{\mathcal{F}'} = U$. Let~$S$ be a set of size~$k$ which contains arcs from vertices~$v_j$ to~$z$. The graph~$G' = (V, A \cup S)$ contains the following shortest paths, each containing~$z$: 
    \begin{compactitem}
        \item For each~$v \in V_V$ which is the endpoint in an arc in~$S$, and each~$c \in V_C$, there is a shortest path from~$c$ to~$v$ containing~$z$. As the target of all arcs in~$S$ is a vertex in~$V_V$ and~$|S| = k$, the number of such shortest paths is~$k(m(m+n-1))$.
        \item As~$I$ is a \NI{}, there is at least one vertex~$u$ in~$G'$ such that there is no path from the vertices in~$V_C$ to~$u$. Hence, the number of  paths from vertices in~$V_C$ to vertices in~$V_U$ is at most~$n-1(m(m+n-1))$.
    \end{compactitem}
    Hence,~$b_z$ can be increased to at most~$(k+n-1)(m(m+n-1))$ and~$I'$ is a \YI{}.  
\end{proof}

\section{Conclusion and Outlook}
We studied the (parameterized) complexity of \CI{} and \BI\ with respect to the number~$k$ of added edges and (unfortunately) obtained mostly hardness results even in several special cases that are relevant to practice. On the plus side, we obtained tractability results relating to the vertex-deletion distance to cluster graphs.

\looseness=-1 Our tractability results yield running times that are impractical and need to be improved. Some further questions that we left open
are as follows. First, it is not hard to show that \CI\ polynomial-time solvable on graphs of
diameter~2. Is this also true for diameter~3? 
As we showed, for diameter~4 it is NP-hard. Noticeable is also that the problem seems to be harder on disconnected graphs. In particular, our reductions also imply NP-hardness for \emph{disconnected} graphs where every connected component has diameter~2.

There seem to be similarities between \DS\ and \CI, as
indicated by our hardness reductions. \DS\ is
fixed-parameter tractable with respect to the combined parameter
maximum degree and~$k$. Does the same hold for \CI?  Similar questions
extend to \BI. 
For \BI\ it would also be interesting to
see, whether in our fixed-parameter algorithm for the combined parameter solution size~$k$ and the distance to cluster graph, we can remove the dependency on~$k$.

\bibliographystyle{abbrvnat}
\bibliography{sources}

\end{document}